\documentclass[letterpaper]{article} 
\usepackage[preprint]{liediscover}  
\usepackage[hyphens]{url}  
\usepackage{graphicx} 
\usepackage{natbib}  
\usepackage{caption} 
\usepackage{algorithm}
\usepackage{algorithmic}
\usepackage{array}

\usepackage{newfloat}
\usepackage{amsmath}
\usepackage{amsthm}
\usepackage{amsfonts}
\usepackage{listings}
\DeclareCaptionStyle{ruled}{labelfont=normalfont,labelsep=colon,strut=off} 
\floatstyle{ruled}
\newfloat{listing}{tb}{lst}{}
\floatname{listing}{Listing}

\usepackage{booktabs}
\newtheorem{theorem}{Theorem}
\usepackage[table]{xcolor}  

\title{LieDiscover: Adaptive Symbolic Library Construction for Explicit Open-form Symmetry Discovery}
\author{
    Written by AAAI Press Staff\textsuperscript{\rm 1}\thanks{With help from the AAAI Publications Committee.}\\
    AAAI Style Contributions by Peter Patel Schneider,
    Sunil Issar,\\
    J. Scott Penberthy,
    George Ferguson,
    Hans Guesgen,
    Francisco Cruz\equalcontrib\corresponding,
    Marc Pujol-Gonzalez\equalcontrib\corresponding
}
\affiliations{
    \textsuperscript{\rm 1}School of Mathematical Sciences, East China Normal University\\

    1101 Pennsylvania Ave, NW Suite 300\\
    Washington, DC 20004 USA\\
    proceedings-questions@aaai.org
}

\title{LieDiscover: Adaptive Symbolic Library Construction for Explicit Open-form Symmetry Discovery}
\author {
    Xinxin Li\textsuperscript{\rm 1},
    Jianming Ma\textsuperscript{\rm 2},
    Xingyu Cui\textsuperscript{\rm 4},
    Da Li\textsuperscript{\rm 5},
    Juan Zhang\textsuperscript{\rm 3},
    Junping Yin\textsuperscript{\rm 4},
}
\affiliations {
    \textsuperscript{\rm 1}East China Normal University\
    \textsuperscript{\rm 2}Shanghai Jiao Tong University\ \textsuperscript{\rm 3}Beihang University\\ \textsuperscript{\rm 4} Institute of Applied Physics and Computational Mathematics\
    \textsuperscript{\rm 5}Northeast Normal University\\
}

\begin{document}

\maketitle

\begin{abstract}
Discovering underlying symmetries from data has emerged as a crucial challenge in scientific discovery. Existing data-driven methods for symmetry discovery fail to determine the exact number and mathematical form of unknown infinitesimal generators. Recent explicit methods represent generators using a predefined function library and identify them through algebraic optimization, but they often struggle to capture complex symmetries involving high-order polynomials or transcendental functions. To address this limitation, we formulate symmetry discovery as a joint optimization problem over the function library and coefficients. We propose a novel framework that leverages an encoder-decoder architecture to dynamically generate symbolic expressions and expand the library. This generation process is optimized via reinforcement learning, which accelerates the exploration of the symbolic search space through step-wise rewards. Experiments demonstrate that LieDiscover can successfully uncover open-form infinitesimal generators involving high-order polynomials or transcendental functions, which remain intractable for existing methods. The discovered symmetries also improve performance in downstream PDE solving and discovery tasks.
\end{abstract}


\section{Introduction}

\label{sec:Introduction}
Symmetry is fundamentally defined as invariance under a group of transformations \citep{weyl2015symmetry}. It is widely applied across fields ranging from fundamental physics \citep{gross1996role} to modern machine learning \citep{bronstein2021geometric,cohen2016group,kondor2018generalization}. Crucially, leveraging symmetries has been shown to reduce problem complexity \citep{udrescu2020ai}, improve model generalization \citep{benton2020learning}, and provide powerful priors for scientific discovery \citep{yang2025discovering,chen2024invariance}. These benefits typically rely on prior knowledge of the underlying symmetries, which is often unavailable in practice. Consequently, discovering symmetries directly from data has become an increasingly important problem.

Existing symmetry discovery methods have evolved from selecting symmetries within predefined transformation families to discovering infinitesimal generators directly from data. While methods such as LieGAN \cite{yang2023generative} and LieSD \cite{hu2025symmetry} have successfully automated the discovery of linear symmetries, their underlying mathematical modeling relies on matrix representations. This limitation makes their methods ineffective at capturing nonlinear symmetries in real-world systems. To improve the ability to model complex symmetries, recent methods have extended their capabilities to learn nonlinear symmetries. LaLiGAN \citep{yang2023latent} overcomes this challenge by using an encoder-decoder architecture to map data into a latent space, where nonlinear symmetries can be modeled as linear transformations. Alternatively, \citet{ko2024learning} parameterize the infinitesimal generators using multilayer perceptrons (MLPs) to extract symmetries directly from data.

Despite these advancements in capturing nonlinear symmetries, existing methods still struggle to meet the demands of downstream physical applications. Specifically, approaches built upon black-box neural architectures suffer from a lack of interpretability, failing to output intuitive formulas for generators. Even when methods attempt to offer interpretability, such as \citet{shaw2024symmetry} framing the problem as an interpretable equation-solving task, their approach is restricted to low-dimensional systems. More importantly, current methods generally fail to determine the exact number and mathematical form of unknown generators. The lack of mathematical representation hinders their direct application in tasks such as the discovery of partial differential equations (PDEs). LieNLSD \cite{hu2025explicit} formulates infinitesimal generators as linear combinations of a predefined function library, effectively reducing the complex symmetry discovery task to a linear-system solving problem. It explicitly determines the exact number and the mathematical forms of unknown generators for the first time. However, this method is constrained by the predefined library, rendering it incapable of capturing complex symmetries whose infinitesimal generators involve high-order polynomials or transcendental functions.

To overcome this restriction, we propose LieDiscover, a novel framework that eliminates the reliance on the predefined function library. By jointly discovering the optimal function library with its corresponding coefficients, our method enables the automated discovery of open-form infinitesimal generators. Specifically, initialized with a basic library restricted to linear symmetries, our approach leverages an encoder-decoder architecture to dynamically generate and append symbolic trees to the library. Guided by immediate step-wise rewards, this entire generative process can be efficiently optimized using reinforcement learning. In summary, our contributions are as follows: 
\begin{itemize}
    \item We formulate open-form infinitesimal generator discovery as a joint optimization problem over the function library and its coefficients, enabling the automated discovery of explicit open-form generators.
    \item We establish a reinforcement learning framework for adaptive symbolic library construction, which utilizes immediate step-wise rewards to efficiently guide the search process. 
    \item We successfully apply the discovered symmetries to PDE solving and PDE discovery tasks, substantially improving model accuracy.
\end{itemize}

\section{Related Work}
\label{sec:Related Work}
\noindent\textbf{Symmetry Discovery. }Symmetry
discovery aims to identify symmetries in data. Early methods \citep{benton2020learning,zhou2020meta,romero2022learning,van2023learning,krippendorf2021detecting} perform symmetry selection within a predefined transformation family. While these methods automatically identify transformations from the data, they cannot uncover symmetries that lie beyond their predefined group or parameterization.

Recent methods learn symmetries via infinitesimal generators without relying on predefined groups. First, methods like L-conv \citep{dehmamy2021automatic} jointly learn generators alongside network parameters. Second, post-hoc approaches extract symmetries from trained models. For instance, LieGG \citep{moskalev2022liegg} finds invariances using a polarization matrix, while LieSD \citep{hu2025symmetry} recovers Lie algebra bases via a null-space method. Third, SymmetryGAN \citep{desai2022symmetry} and LieGAN \citep{yang2023generative} use adversarial training to find transformations that preserve data distributions. However, because all these methods represent generators as constant matrices, they are limited to linear symmetries.

More recent studies have begun to explore nonlinear symmetry discovery. LaLiGAN \citep{yang2023latent} learns a latent representation in which nonlinear symmetries become linear. Rather than relying on such a latent-space linearization, \citet{ko2024learning} and \citet{shaw2024symmetry} directly model nonlinear infinitesimal generators, using neural vector fields and manifold tangent vector fields, respectively. However, the recovered generators remain implicit. For explicit discovery, LieNLSD \citep{hu2025explicit} represents nonlinear infinitesimal generators using a predefined function library, thereby transforming symmetry discovery into the problem of solving a linear system. But its reliance on a predefined library restricts the learned generators to the span of the predefined functions. To overcome this restriction, we propose LieDiscover, an explicit framework for open-form symmetry discovery. Instead of restricting generators to a predefined library, LieDiscover automatically expands the search space by generating candidate functions from data.

\noindent\textbf{Applications of Symmetry. } Symmetries are widely applied in the design of equivariant networks \citep{cohen2016group,fuchs2020se}, as well as the solving \citep{akhound2023lie,alpar2024applications} and discovery \cite{yang2025discovering,chen2024invariance} of PDEs. Our paper focuses on symmetry-guided PDE solving and discovery. More detailed content is provided in Appendix A.

\begin{figure*}[!t]
    \centering
    \includegraphics[width=0.92\linewidth]{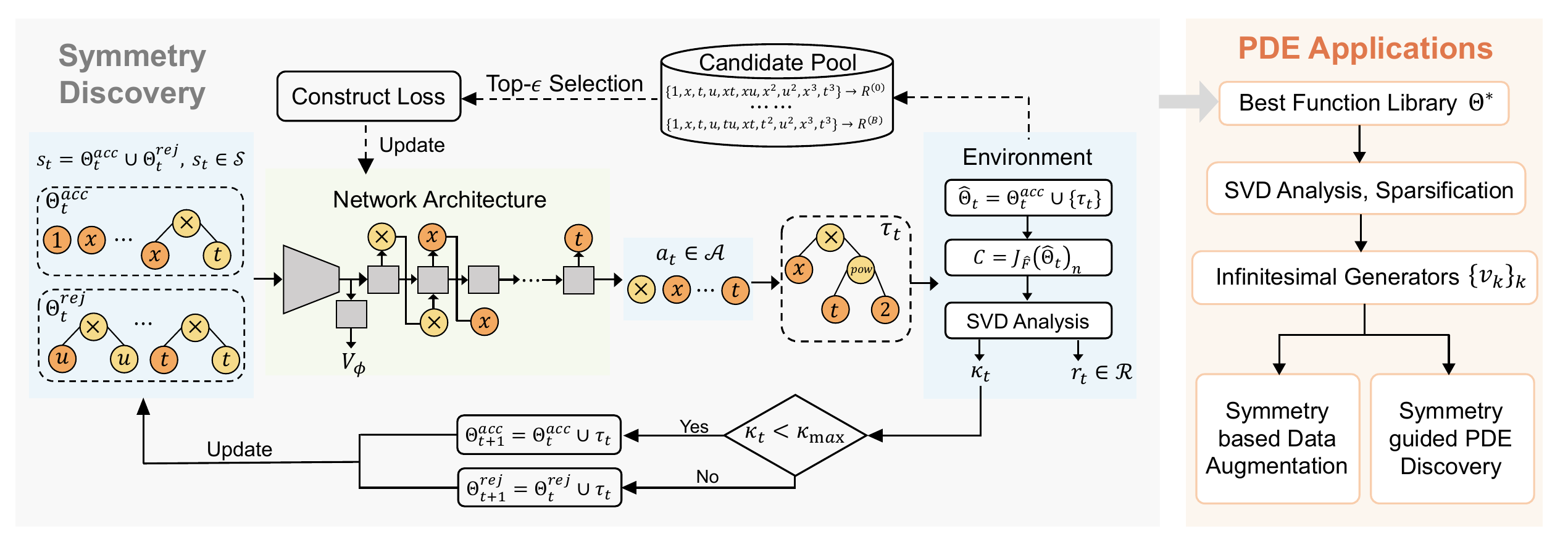}
    \caption{The overall pipeline of LieDiscover. On the left, symmetry discovery is formulated as an MDP, using solid arrows for information flow and dashed arrows for training network. On the right, the extracted symmetry generators are applied to downstream PDE tasks, such as data augmentation and equation discovery.}
    \label{fig:liediscover}
\end{figure*}

\section{Preliminary}
\label{sec:Preliminary}
Continuous symmetries in dynamical systems are transformations that leave the system invariant. Since searching for these global transformations directly is challenging, modern methods instead search for their infinitesimal generators \citep{dehmamy2021automatic,hu2025explicit,hu2025symmetry}. 

For a Lie group acting on the space $X \times U = \mathbb{R}^p \times \mathbb{R}^q$, its infinitesimal generator $\mathbf{v}$ and its $n$-th prolongation $\text{pr}^{(n)}\mathbf{v}$ can be represented as \citep{hu2025explicit}:
\begin{equation}
\label{eq:infinitesimal generator}
\mathbf{v} = W\Theta(x, u) \cdot \nabla,
\end{equation}
\begin{equation}
\label{eq:prolongation_linear}
\text{pr}^{(n)}\mathbf{v} = \Theta_n(x, u^{(n)})\operatorname{vec}(W) \cdot \nabla,
\end{equation}
where $W \in \mathbb{R}^{(p+q) \times r}$ is a coefficient matrix, $\Theta(x, u) \in \mathbb{R}^{r \times 1}$ is a predefined function library, and $\Theta_n$ is the prolonged matrix determined by $\Theta$. Crucially, the prolongation maintains linearity with respect to $W$. This linearity transforms the classical infinitesimal criterion into a straightforward linear algebraic problem \citep{hu2025explicit}:
\begin{theorem}
\label{thm:linearized_symmetry}
Let $\mathcal{S}: F(x,u^{(n)})=0$ be a full-rank differential equation system with a solution manifold $\mathcal{M}$. Under the parameterization of Equations \eqref{eq:infinitesimal generator} and \eqref{eq:prolongation_linear}, $\mathbf{v}$ is an infinitesimal symmetry of $\mathcal{S}$ if and only if
\begin{equation}
\label{eq:linear_criterion}
J_F(x,u^{(n)}) \Theta_n(x,u^{(n)}) \operatorname{vec}(W) = 0, \quad \forall (x,u^{(n)})\in\mathcal{M},
\end{equation}
where $J_F$ denotes the Jacobian matrix of $F$.
\end{theorem}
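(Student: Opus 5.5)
The plan is to derive the statement from Olver's classical infinitesimal criterion and then rewrite that criterion in the linear parameterization of Equations \eqref{eq:infinitesimal generator}--\eqref{eq:prolongation_linear}. For a system $F(x,u^{(n)})=0$ of maximal (full) rank, a connected local group of transformations is a symmetry group if and only if
\[
\text{pr}^{(n)}\mathbf{v}\,[F_\nu]=0 \quad \text{on } \mathcal{M}
\]
for every generator $\mathbf{v}$ of the group and every component $F_\nu$. This is where full rank is used. It guarantees that $\mathcal{M}$ is a regular submanifold of the jet space $J^n$ whose tangent space at each point is the kernel of $J_F$. Hence the tangency of $\text{pr}^{(n)}\mathbf{v}$ to $\mathcal{M}$, which is equivalent to invariance of $\mathcal{M}$ under the prolonged flow, is exactly the condition that the derivative of each $F_\nu$ along $\text{pr}^{(n)}\mathbf{v}$ vanishes on $\mathcal{M}$.

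The second step is to write the action of a vector field on $F$ in matrix form. Let $\text{pr}^{(n)}\mathbf{v}=\xi\cdot\nabla$, where $\xi(x,u^{(n)})$ is the column of coefficients in the jet coordinates $(x,u^{(n)})$ and $\nabla$ is the corresponding column of partial derivatives. Then
\[
\text{pr}^{(n)}\mathbf{v}[F_\nu]=\sum_k \xi_k\,\partial_k F_\nu,
\]
and stacking over $\nu$ gives $\text{pr}^{(n)}\mathbf{v}[F]=J_F\,\xi$.

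The third step is to show that $\xi=\Theta_n\operatorname{vec}(W)$, which is Equation \eqref{eq:prolongation_linear}. I would justify this linearity explicitly rather than only cite it. The components of $\mathbf{v}$ are $\xi^i=(W\Theta)_i$ for $i\le p$ and $\phi^\alpha=(W\Theta)_{p+\alpha}$. Each is linear in the entries of $W$, since $(W\Theta)_i=\sum_j W_{ij}\Theta_j=(\Theta^\top\otimes e_i^\top)\operatorname{vec}(W)$. The prolongation formula
\[
\phi^J_\alpha = D_J\Bigl(\phi^\alpha-\sum_i \xi^i u^\alpha_i\Bigr)+\sum_i \xi^i u^\alpha_{J,i}
\]
uses only total derivatives $D_J$, multiplication by jet variables, and sums. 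Each of these operations is linear in $(\xi,\phi)$ with coefficients independent of $W$. Therefore every prolonged coefficient is a fixed linear functional of $\operatorname{vec}(W)$, and the rows of $\Theta_n$ are built from $\Theta$ and its total derivatives.

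Combining the three steps, the criterion becomes $J_F\Theta_n\operatorname{vec}(W)=0$ on $\mathcal{M}$, and each step is an equivalence, so both directions of the theorem follow.

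The main obstacle is the precise scope of "infinitesimal symmetry." The sufficiency direction of Olver's theorem needs the full-rank hypothesis, and in the general case also local solvability. If "infinitesimal symmetry" is defined as $\mathbf{v}$ generating a symmetry group, I would check that the full-rank assumption as stated suffices. Otherwise I would define "infinitesimal symmetry" directly via tangency of $\text{pr}^{(n)}\mathbf{v}$ to $\mathcal{M}$, so that the equivalence is immediate from the argument above.
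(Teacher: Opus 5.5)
Your proposal is correct and follows the same argument the theorem rests on; the paper states it without proof and cites LieNLSD for it. That argument is Olver's infinitesimal criterion $\text{pr}^{(n)}\mathbf{v}[F]=0$ on $\mathcal{M}$, rewritten as $J_F$ times the prolonged coefficient vector, which equals $\Theta_n\operatorname{vec}(W)$ because the prolongation formula is linear in $W$.

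One small correction to your closing remark: full rank is what makes the criterion \emph{sufficient}, while local solvability is needed for \emph{necessity} (symmetry $\Rightarrow$ criterion) when ``infinitesimal symmetry'' means generating a solution-preserving group. Your fallback of defining it via tangency of $\text{pr}^{(n)}\mathbf{v}$ to $\mathcal{M}$ removes that issue.
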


On discrete empirical data sampled from $\mathcal{M}$, Equation \eqref{eq:linear_criterion} reduces to a homogeneous linear system $C\operatorname{vec}(W)=0$. Singular Value Decomposition (SVD) of $C$ efficiently resolves this system: the zero singular values determine the number of independent generators, and their corresponding right singular vectors provide an orthonormal basis for $W$, fully recovering the mathematical forms of the generators.

However, Theorem \ref{thm:linearized_symmetry} relies on a predefined library $\Theta$, restricting the search to a closed function space that cannot capture high-order polynomials or transcendental functions. To overcome this closed-set restriction, we propose LieDiscover, an open-form generator discovery framework. Instead of relying on a fixed library, this framework dynamically constructs mathematical expressions through the free composition of variables and operators, empowering the model to explore complex symmetries within an open symbolic space.

\section{Methodology}
\label{sec:Methodology}
LieDiscover formulates open-form symmetry discovery as the sequential construction of a symbolic library. Starting from a small initial library $\Theta_0$ that contains only linear terms, the model dynamically generates new symbolic expressions to update the library. Since navigating the discrete space of mathematical expressions is non-differentiable, we cast this process as a reinforcement learning problem. The reward signal encourages terms that expand the symmetry space. This avoids the difficult joint search over $\Theta$ and $W$ while still utilizing the exact linear null-space criterion in Theorem~\ref{thm:linearized_symmetry}. Figure \ref{fig:liediscover} shows the pipeline of LieDiscover.
\subsection{MDP Definition}
\label{sec:MDP Definition}
We define the search process as a finite-horizon Markov Decision Process (MDP) $\langle\mathcal{S},\mathcal{A},\mathcal{T},\mathcal{R}\rangle$. The MDP state records the current symbolic search history, the action proposes a new expression tree, the transition accepts or rejects the proposed expression according to the induced numerical system, and the reward measures the improvement in the recovered symmetry space.

\noindent\textbf{State Space.}
At step $t$, the state is
\begin{equation}
\label{eq:state_space}
s_t=(\Theta_t^{\mathrm{acc}},\Theta_t^{\mathrm{rej}}), \ s_t\in\mathcal{S},
\end{equation}
where $\Theta_t^{\mathrm{acc}}$ is the accepted library used to construct generators and $\Theta_t^{\mathrm{rej}}$ stores rejected candidates that cause numerical degeneracy or fail the admissibility test. This representation conditions the policy not only on discovered useful terms but also on explored invalid regions of the symbolic space.

\noindent\textbf{Action Space.}
At step $t$, the action $a_t\in \mathcal{A}$ \ is a token sequence generated in preorder from a vocabulary of variables, constants, and operators. The sequence is subsequently parsed into a symbolic expression tree $\tau_t$. We impose arity constraints during generation so that every action corresponds to a syntactically valid mathematical expression.

\noindent\textbf{Environment Evaluation.} The environment evaluates candidates using the infinitesimal symmetry criterion. Given samples $(x, u^{(n)})$, we first train a surrogate $\widehat{f}_\phi$ to approximate the temporal derivative from spatial features, implicitly defining the residual:
\begin{equation*}
\label{eq:pde_form}
\widehat F_{\phi}(x,u^{(n)})=
\widehat f_{\phi}(x,D_{\boldsymbol{\beta}_1}u,D_{\boldsymbol{\beta}_2}u,\dots,D_{\boldsymbol{\beta}_s}u)
-D_{\boldsymbol{\beta}_0}u=0,
\end{equation*}
where $x=(x^1,\dots,x^p)$ denotes independent variables, $u=(u^1,\dots,u^q)$ denotes dependent variables, and $u^{(n)}$ collects derivatives up to order $n$. Here, $D_{\boldsymbol{\beta}_0}$ typically denotes the temporal derivative, while $D_{\boldsymbol{\beta}_i}$ ($i\geq1$) denotes a selected spatial derivative. After surrogate training, automatic differentiation computes
$J_{\widehat F}=[J_{\widehat f_{\phi}},-I]$ at the sampled points. These Jacobians are cached by the environment and reused to evaluate candidate bases through
$J_{\widehat F}\Theta_n\operatorname{vec}(W)=0$ during policy training. The {Jacobian Computation} step is detailed in Appendix D.1.

For a proposed tree $\tau_t$, the temporary library is $\widehat{\Theta}_t=\Theta_t^{\mathrm{acc}}\cup\{\tau_t\}$. Unlike methods that manually derive the prolonged library, LieDiscover automatically constructs the prolonged feature matrix $(\widehat{\Theta}_t)_n$ from $\widehat{\Theta}_t$ (Appendix C.1, Algorithms 1 and 2). The environment then builds the data matrix $\widehat{C}=J_{\widehat F}(\widehat{\Theta}_t)_n$ and applies adaptive SVD (Appendix C.2, Algorithm 3) to obtain singular values $\{\sigma_j\}$,
which allows for the automatic determination of the null-space threshold without manual specification. The null-space threshold is selected adaptively as
\begin{equation}
    \label{eq:svd_thresholds}
    \gamma = \sqrt{\sigma_{j^\ast} \sigma_{j^\ast+1}}, \qquad j^\ast = \arg\max_{j\in\mathcal{I}} \left( \log_{10} \frac{\sigma_j}{\sigma_{j+1}} \right),
\end{equation}
where $\mathcal{I}=\{j:\gamma_{\min}\le\sqrt{\sigma_j\sigma_{j+1}}\le\gamma_{\max}\}$. The number of independent generators is the estimated null-space dimension $\widehat{d}_t=|\{j:\sigma_j<\gamma\}|$.

\noindent\textbf{State Transition.}
To avoid accepting expressions that only create ill-conditioned linear systems, the environment computes the condition number $\kappa$ of the non-null subspace. If $\kappa>\kappa_{\max}$, the candidate is rejected and the transition is
$\Theta_{t+1}^{\mathrm{acc}}=\Theta_t^{\mathrm{acc}}$, 
$\Theta_{t+1}^{\mathrm{rej}}=\Theta_t^{\mathrm{rej}}\cup\{\tau_t\}$ and $d_{t+1}=d_t$.
Otherwise, the candidate is accepted and the transition is
$\Theta_{t+1}^{\mathrm{acc}}=\Theta_t^{\mathrm{acc}}\cup\{\tau_t\}$,
$\Theta_{t+1}^{\mathrm{rej}}=\Theta_t^{\mathrm{rej}}$ and $d_{t+1}=\widehat{d}_{t}$.

\noindent\textbf{Reward Function.}
The reward encourages expressions that increase the dimension of the symmetry space while mildly preferring simple formulas. Rejected candidates receive $r_t=0$. For an accepted candidate, the reward $r_t\in \mathcal{R}$ is defined as:
\begin{equation}
\label{eq:reward equation}
r_t =
\begin{cases}
\Delta d_t\eta^{t - t_n-1}
+ \alpha e^{-\lambda c(\tau_t)}, & \text{if } \Delta d_t > 0, \\
\alpha e^{-\lambda c(\tau_t)}, & \text{if } \Delta d_t = 0,
\end{cases}
\end{equation}
where $\Delta d_t=d_{t+1}-d_t$, $t_{n}$ is the most recent step before $t$ that increased the null-space dimension, $c(\tau_t)$ is the expression complexity measured by sequence length, and $\eta,\alpha,\lambda$ are positive hyperparameters. 

To ensure that a larger dimension increase ($\Delta d_A > \Delta d_B$) consistently yields a higher reward ($r_A > r_B$), independent of the elapsed steps, we establish the following theorem, which provides a sufficient condition for this property:
\begin{theorem}
\label{thm:reward_condition}
Let $A$ and $B$ be two transitions with dimensional increments such that $\bar{d} \ge \Delta d_A > \Delta d_B \ge 1$. Assume that their delays satisfy $0\le h_A,h_B\le T_{\max}-1$ and their expression complexities satisfy $c_A,c_B\in[c_{\min},c_{\max}]$. A sufficient condition for $r_A>r_B$ to hold for every admissible pair is
\begin{equation}
\label{eq:theorem_condition}
\bar{d}\eta^{T_{\max}-1} - (\bar{d}-1) > \alpha \left( e^{-\lambda c_{\min}} - e^{-\lambda c_{\max}} \right),
\end{equation}
where $\bar{d}$ is the finite upper bound of one-step dimensional increments.
\end{theorem}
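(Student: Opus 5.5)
We will argue directly from the reward formula \eqref{eq:reward equation}. Both transitions have $\Delta d\ge 1$, so both fall in the first case. Writing $h = t - t_n - 1\in\{0,\dots,T_{\max}-1\}$ for the delay, we have
\[
r_A - r_B = \Delta d_A\,\eta^{h_A} - \Delta d_B\,\eta^{h_B} + \alpha\bigl(e^{-\lambda c_A} - e^{-\lambda c_B}\bigr).
\]
The plan is to bound the dimension part from below and the complexity part from above, each uniformly over all admissible pairs, and then combine the two bounds.

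For the complexity part, we use that $c\mapsto e^{-\lambda c}$ is decreasing on $[c_{\min},c_{\max}]$. This gives
\[
\alpha\bigl(e^{-\lambda c_A} - e^{-\lambda c_B}\bigr) \ge -\alpha\bigl(e^{-\lambda c_{\min}} - e^{-\lambda c_{\max}}\bigr),
\]
with the worst case at $c_A=c_{\max}$, $c_B=c_{\min}$.

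For the dimension part, we first note that condition \eqref{eq:theorem_condition} forces $\bar d\eta^{T_{\max}-1} > \bar d - 1 \ge 0$, since its right-hand side is nonnegative. We will treat the intended regime $0<\eta\le 1$, where the discount makes rewards decay. Then $\eta^{h_A}\ge \eta^{T_{\max}-1}$ and $\eta^{h_B}\le 1$, so
\[
\Delta d_A\eta^{h_A} - \Delta d_B\eta^{h_B} \ge \Delta d_A\,\eta^{T_{\max}-1} - \Delta d_B \ge \Delta d_A\,\eta^{T_{\max}-1} - (\Delta d_A - 1),
\]
where the last step uses integrality, $\Delta d_B\le \Delta d_A-1$. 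The right-hand side equals $g(k) = k(\eta^{T_{\max}-1}-1)+1$ at $k=\Delta d_A$. This is nonincreasing in $k$ because $\eta\le 1$, so over $k\in\{2,\dots,\bar d\}$ it is minimized at $k=\bar d$, giving the value $\bar d\eta^{T_{\max}-1}-(\bar d-1)$.

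Combining the two bounds,
\[
r_A - r_B \ge \bar d\eta^{T_{\max}-1}-(\bar d-1) - \alpha\bigl(e^{-\lambda c_{\min}}-e^{-\lambda c_{\max}}\bigr),
\]
which is strictly positive under \eqref{eq:theorem_condition}. This proves the claim.

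The main obstacle is justifying the regime of $\eta$. If $\eta>1$, the inequality $\eta^{h_A}\ge\eta^{T_{\max}-1}$ reverses and the worst case becomes $h_A=0$, $h_B=T_{\max}-1$. In that case the stated condition is not obviously sufficient. I would first try to show that the condition itself rules out that regime or makes it harmless; failing that, I would state $\eta\in(0,1]$ explicitly as the discount assumption implicit in the reward design. The integrality step $\Delta d_B\le\Delta d_A-1$ is also essential and should be stated explicitly, since dimensions of null spaces are integers.
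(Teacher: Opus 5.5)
Your proof is correct and follows the paper's argument essentially step for step: the same decomposition of $r_A-r_B$, the same worst cases $h_A=T_{\max}-1$, $h_B=0$, $c_A=c_{\max}$, $c_B=c_{\min}$, and the same integrality step. The only cosmetic difference is that you eliminate $\Delta d_B$ and then use $\Delta d_A\le\bar d$, whereas the paper eliminates $\Delta d_A$ and uses $\Delta d_B\le\bar d-1$. The paper's proof likewise assumes $0<\eta\le 1$ without listing it as a hypothesis, and your fallback of stating it explicitly is the right fix; the condition cannot exclude $\eta>1$, since e.g.\ $T_{\max}=2$, $\bar d=2$, $\eta=10$ and small $\alpha$ satisfy it while $h_A=0$, $h_B=1$ give $r_A<r_B$.
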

Motivated by the Theorem \ref{thm:reward_condition} (with the complete proof provided in Appendix B), we use the practical approximation $\eta > \sqrt[T_{\max}-1]{(\bar{d}-1)/\bar{d}}$, where $\bar{d}=p+q$ bounds the one-step dimension increment. This approximation omits the correction induced by the simplicity bonus because its weight $\alpha$ is chosen to be small in our experiments. It therefore keeps the dimension-increment reward dominant while using expression simplicity only as a secondary preference.

\subsection{Training Method}
\label{sec:Training Method}
The policy $\pi_\theta(a_t\mid s_t)$ is trained to maximize the expected cumulative reward,
\begin{equation}
\label{eq:optim_problem}
\pi^*
=
\arg\max_{\pi_\theta}
\mathbb{E}_{\pi_\theta}
\left[
\sum_{t=0}^{T_{\max}-1} r(s_t,a_t)
\right].
\end{equation}
We adopt an Actor-Critic algorithm in which the actor generates candidate symbolic trees and the critic estimates the long-term value of the current library state. Since successful symbolic expressions are sparse in the open-form search space, we use a risk-seeking policy gradient objective \citep{tamar2014policy, rajeswaran2016epopt}, which is widely used in symbolic regression \citep{petersen2019deep} and PDE discovery \citep{du2022discover}. In each iteration, we sample a batch of $B$ trajectories, compute their returns $R^{(i)}=\sum_{t=0}^{T_{\max}-1}r(s_t^{(i)},a_t^{(i)})$, and keep the top $\epsilon$ fraction:
\begin{equation}
\label{eq:elite_set}
\mathcal{E}_{\epsilon} = \{ i: R^{(i)}\ge \tilde{R}_{\epsilon} \},
\end{equation}
where $\tilde{R}_{\epsilon}$ is the $(1-\epsilon)$-quantile return. The actor is updated only on the elite trajectories:
\begin{equation}
\label{eq:actor_loss}
\begin{aligned}
\mathcal{L}_{\mathrm{actor}}
=
-
\frac{1}{|\mathcal{E}_{\epsilon}|}
\sum_{i\in\mathcal{E}_{\epsilon}}
\sum_{t=0}^{T_{\max}-1}
&
\log \pi_\theta(a_t^{(i)}\mid s_t^{(i)})
\\
&\quad
\left(
G_t^{(i)}-V_\phi(s_t^{(i)})
\right).
\end{aligned}
\end{equation}
where $G_t^{(i)}=\sum_{k=t}^{T_{\max}-1}r(s_k^{(i)},a_k^{(i)})$ is the return-to-go, and $V_\phi(s_t^{(i)})$ is the critic baseline. We train the critic to minimize the mean squared error over the same elite set:
\begin{equation}
\label{eq:critic_loss}
\mathcal{L}_{\mathrm{critic}}
=
\frac{1}{2|\mathcal{E}_{\epsilon}|}
\sum_{i\in\mathcal{E}_{\epsilon}}
\sum_{t=0}^{T_{\max}-1}
\left(
V_\phi(s_t^{(i)})-G_t^{(i)}
\right)^2 .
\end{equation}
To prevent premature convergence and encourage exploration, we add an entropy regularization term $\mathcal{H}$. The final joint loss is:
\begin{equation}
\label{eq:total_loss}
\mathcal{L}
=
\mathcal{L}_{\mathrm{actor}}
+
\lambda_v\mathcal{L}_{\mathrm{critic}}
-
\lambda_H
\mathbb{E}
\left[
\mathcal{H}(\pi_\theta(\cdot\mid s_t))
\right].
\end{equation}
Given the best library $\Theta^\ast$ found during search, SVD yields a
null-space basis $Q$, which is generally not sparse. To find an interpretable and sparse basis, we introduce an orthogonal rotation matrix $P$:
\begin{equation}
\label{eq:LADMAP}
\min_{P,Z} \lVert Z \rVert_{1,1}, \quad \text{s.t.} \quad P^\top P=I, \quad Z=QP.
\end{equation}
We use LADMAP \citep{lin2011linearized} to solve this problem, driving entries in $QP$ toward zero to obtain clean generators. Here, $\Theta^\ast$ denotes the best library found before pruning. Finally, we prune low-support terms from $\Theta^\ast$ by removing candidates that leave the null-space dimension unchanged, yielding the final compact library $\Theta_f$. Algorithm 4 in Appendix C.3 provides the sparsification and pruning details. The complete training algorithm is detailed in Algorithm 5 of Appendix C.4. 

\subsection{Network Architecture}
\label{sec:Network Architecture}
The policy and value functions share the same encoded representation of the symbolic state. The actor is an encoder-decoder model that maps the current accepted and rejected libraries to a valid expression tree, while the critic maps the same state representation to a scalar value.

\noindent\textbf{Actor.}
The actor parameterizes $\pi_\theta(a_t\mid s_t)$ with an encoder-decoder architecture. The encoder first converts the variable-size state $s_t=(\Theta_t^{\mathrm{acc}},\Theta_t^{\mathrm{rej}})$ into a fixed-dimensional context vector. Each symbolic tree is padded to maximum length $L$, and the library is capped at maximum capacity $N$ to support batched tensor computation. Tokens are embedded into dense vectors and processed by a Bidirectional LSTM \citep{graves2005framewise}. The forward pass captures the preorder tree skeleton, while the backward pass propagates information from leaves to parent operators. The token features of each tree are pooled and compressed by an MLP into a tree-level embedding $e_i$. Since the ordering of terms inside a library has no physical meaning, we use a Deep Sets encoder \citep{zaheer2017deep} to aggregate accepted and rejected terms separately:
\begin{equation}
\label{eq:global representations}
\boldsymbol{z}_t^\star = \frac{1}{|\Theta^{\star}_t|} \sum_{\tau_i \in \Theta_t^{\star}} \mathrm{MLP}(e_i), \quad \star \in \{\mathrm{acc}, \mathrm{rej}\}.
\end{equation}
A linear layer maps $(\boldsymbol{z}_t^{\mathrm{acc}},\boldsymbol{z}_t^{\mathrm{rej}})$ to the context vector $h^{(t)}_0$.

\begin{table*}[!t]
\centering
\small
\setlength{\tabcolsep}{2pt}
\renewcommand{\arraystretch}{1.05}
\begin{tabular}{lcccccc}
\toprule
Method
& Top
& Burg.
& Heat
& Damped
& Forced\\
\midrule

LieGAN
& $(2.51 \pm 0.41)\times10^{-1}$
& $1.58 \pm 0.05$
& $2.59 \pm 0.04$
& $2.20 \pm 0.01$
& $1.83 \pm 0.09$\\

LieNLSD
& $(1.24 \pm 0.17)\times10^{-1}$
& $(2.29 \pm 0.56)\times10^{-2}$
& $(8.90 \pm 2.94)\times10^{-4}$
& $1.57 \pm 0.00$
& $(1.12 \pm 0.75)\times10^{-2}$\\

LieDiscover
&$\mathbf{(8.52 \pm 0.39)\times10^{-2}}$
& $\mathbf{(1.76 \pm 0.07)\times10^{-2}}$
& $\mathbf{(5.04 \pm 2.72)\times10^{-4}}$
& $\mathbf{(1.94 \pm 0.81)\times10^{-2}}$
& $\mathbf{(9.67 \pm 0.52)\times10^{-3}}$\\

\bottomrule
\end{tabular}
\caption{Grassmann distance between the discovered Lie algebra subspace and the ground-truth subspace. Results are reported as mean $\pm$ standard deviation over three random seeds. The best result in each column is highlighted in bold.}
\label{tab:grassmann_distance}
\end{table*}

The decoder initializes its LSTM hidden state with $h^{(t)}_0$ and autoregressively emits a preorder token sequence $a_t=[y_1,\dots,y_l]$ \citep{petersen2019deep}. To guarantee valid symbolic trees, the decoder maintains an arity counter
\begin{equation}
\label{eq:arity tracking}
K_l = K_{l-1} - 1 +\operatorname{arity}(y_l),
\end{equation}
where $K_0=1$ and $\operatorname{arity}(y_l)$ is the arity of token $y_l$. Generation terminates when $K_l=0$. At each step, a dynamic mask $M_l\in\{0,1\}^{|\mathcal{V}|}$ removes syntactically invalid tokens and forbidden-prior choices before the softmax operation.

\noindent\textbf{Critic.}
The critic estimates the value of the current symbolic state using the shared context vector $h^{(t)}_0$. A lightweight MLP outputs
\begin{equation}
    \label{eq:critic}
    V_\phi(s_t)=\mathrm{MLP}_\phi(h_0^{(t)}),
\end{equation}
which approximates the expected return obtained by continuing the symbolic search from $s_t$ under the current policy. This value serves as the baseline in the actor update and reduces the variance of policy-gradient training.

\subsection{PDE Applications}
\label{sec:PDE Applications}
Symmetry has broad applications in scientific computing \citep{brandstetter2022lie,hu2025governing,yang2025discovering,chen2024invariance}. It can not only guide data augmentation for PDE solvers to improve long-term prediction accuracy, but also serve as a physical prior in PDE discovery. In this subsection, we discuss the application of symmetries to PDE systems, as shown in the \textbf{PDE Application} part of Figure \ref{fig:liediscover}.

\noindent\textbf{PDE Solving. }We use the symmetries discovered by LieDiscover for Lie Point Symmetry Data Augmentation (LPSDA)~\citep{brandstetter2022lie}. By training the Fourier Neural Operator (FNO) \citep{li2020fourier} on these augmented datasets, we effectively improve its long-rollout prediction accuracy.

\noindent\textbf{PDE Discovery. }Our proposed method discovers a set of infinitesimal generators $\mathcal{B}=\{\mathbf{v}_k\}_k$, where each generator follows Equation~\eqref{eq:infinitesimal generator}. We transform this symmetry information into function library constraints for SINDy \citep{rudy2017data} and architectural priors for symbolic networks (EQL) \citep{sahoo2018learning}. Besides the architectural priors, we also define a symmetry loss and integrate it directly into the training loss. The specific method for this section is detailed in Appendix D.2.

\section{Experiment}
Following LieNLSD, we evaluate LieDiscover on top quark tagging and six dynamical systems: the Burgers', heat, KdV, wave, Schrödinger, and reaction-diffusion equations. To further test open-form symmetry discovery, we introduce two challenging benchmarks: the damped Burgers' \citep{montecinos2015analytic} and forced transport \citep{de1998multiple} equations. Further details on experimental settings, including computing infrastructure, hyperparameters, and metrics, are provided in Appendix~F. Our evaluation addresses the three questions:
\begin{itemize}
    \item RQ 1: Can LieDiscover accurately recover the symmetry subspace corresponding to the ground truth? 
    \item RQ 2: Can LieDiscover explicitly identify both linear and nonlinear symmetries? Moreover, can it recover open-form infinitesimal generators without relying on a predefined library?
    \item RQ 3: Can the discovered symmetries improve the accuracy of downstream PDE discovery and solving tasks?
\end{itemize}

\subsection{RQ1: Symmetry Subspace Recovery}
\label{sec:Quantitative Metric}
We evaluate the gap between the discovered and ground-truth subspaces using the Grassmann distance, defined as:
\begin{equation}
    d_G(Q_1, Q_2) = \sqrt{\sum_{i=1}^d \theta_i^2},\quad \theta_i = \arccos(\sigma_i),
\end{equation}
where $Q_1, Q_2 \in \mathbb{R}^{n \times d}$ are orthonormal bases, $\sigma_i$ are the singular values of $Q_1^\top Q_2$. Table \ref{tab:grassmann_distance} reports the Grassmann distance of LieGAN, LieNLSD, and LieDiscover. LieDiscover consistently achieves the lowest distance, indicating that LieDiscover recovers subspaces that are closer to the ground truth under the experimental settings. This improvement is substantial on the Top and Damped Burgers datasets. Appendix E.1 presents the results on the remaining datasets.

\subsection{RQ2: Open-Form Symmetry Discovery}
\label{sec:Symmetry Discovery}
\begin{table*}[!t]
\centering
\small
\setlength{\tabcolsep}{4pt}
\begin{tabular}{
c|
>{\raggedright\arraybackslash}p{0.175\textwidth}|
>{\raggedright\arraybackslash}p{0.105\textwidth}|
>{\raggedright\arraybackslash}p{0.175\textwidth}|
>{\raggedright\arraybackslash}p{0.40\textwidth}
}
\toprule
Dataset
& Heat 
& Damped 
& Forced
& Wave  \\
\midrule
Generators
&
\(\begin{aligned}[t]
\mathbf v_1 &= (2t+x^2)\partial_u,\\
\mathbf v_2 &= 2t\partial_t+x\partial_x,\\
\mathbf v_3 &= 2t\partial_x-xu\partial_u,\\
\mathbf v_4 &= x\partial_u,\
\mathbf v_5 = u\partial_u,\\
\mathbf v_6 &= \partial_u,\
\mathbf v_7 = \partial_x,\\
\mathbf v_8 &= \partial_t,\\
\mathbf {v_9} &= (6tx+x^3)\partial_u.
\end{aligned}\)
&
\(\begin{aligned}[t]
\mathbf v_1 &= \partial_t,\\
\mathbf v_2 &= \partial_x,\\
\mathbf v_3 &= e^{-t}\partial_x\\
&\ -e^{-t}\partial_u.
\end{aligned}\)
&
\(\begin{aligned}[t]
\mathbf v_1 &= \partial_t,\
\mathbf v_2 = \partial_u,\\
\mathbf v_3 &= \partial_x+\sin(x)\partial_u,\\
\mathbf v_4 &= u\partial_u+\cos(x)\partial_u,\\
\mathbf v_5 &= (t-x)\partial_u,\\
\mathbf v_6 &= \sin(t-x)\partial_u,\\
\mathbf v_7 &= \cos(t-x)\partial_u,\\
\mathbf v_8 &= (t-x)\partial_t,\\
\mathbf v_9 &= \sin(t-x)\partial_t,\\
\mathbf v_{10} &= \cos(t-x)\partial_t.
\end{aligned}\)
&
\(\begin{aligned}[t]
\mathbf v_1 &= 2tx\partial_t+(t^2+x^2-y^2)\partial_x+2xy\partial_y-xu\partial_u,\\
\mathbf v_2 &= 2ty\partial_t+2xy\partial_x+(t^2-x^2+y^2)\partial_y-yu\partial_u,\\
\mathbf v_3 &= x\partial_t+t\partial_x,\ 
\mathbf v_4 = y\partial_t+t\partial_y,\\
\mathbf v_5 &= t\partial_t+x\partial_x+y\partial_y,\ 
\mathbf v_6 = -y\partial_x+x\partial_y,\\
\mathbf v_7 &= u\partial_u,\
\mathbf v_8 = \partial_t,\
\mathbf v_9 = \partial_x,\
\mathbf v_{10} = \partial_y,\\
\mathbf v_{11} &= \partial_u,\
\mathbf v_{12}=t\partial_u,\
\mathbf v_{13}=x\partial_u,\
\mathbf v_{14}=y\partial_u,\\
\mathbf v_{15} &= tx\partial_u,\
\mathbf v_{16}=ty\partial_u,\
\mathbf v_{17}=xy\partial_u,\
\mathbf v_{18}=txy\partial_u,\\
\mathbf v_{19} &= (t^2+y^2)\partial_u,\
\mathbf v_{20}=(t^2+2x^2-y^2)\partial_u,\\
\mathbf v_{21} &= (t^3+3tx^2)\partial_u,\
\mathbf v_{22}=(t^3+3ty^2)\partial_u.
\end{aligned}\)
\\
\bottomrule
\end{tabular}
\caption{Infinitesimal generators discovered by LieDiscover on selected PDEs. }
\label{tab:selected_generators}
\end{table*}

\subsection{Linear Symmetry Discovery}
\noindent\textbf{Top quark tagging. } 
This task uses the four-momenta $p_i^\mu$ of jet
constituents to distinguish hadronic top-quark decays from
background QCD jets. The data exhibit Lorentz and global
scaling symmetries. We initialize the library as $\Theta_0=\{p_0,p_1,p_2,p_3\}$ and set the maximum size of library to $8$. During the search, no newly generated terms receive positive rewards, showing that the initial library is already sufficient. Ultimately, we identify seven infinitesimal generators. Appendix E.2 provides their explicit mathematical expressions and visualization results.

\subsection{Nonlinear Symmetry Discovery}In this subsection, we analyze the discovered symmetries of five representative PDEs. Appendix E.2 summarizes the results on the remaining datasets.

\noindent\textbf{Burgers' Equation. }
We consider the potential Burgers' equation \citep{hopf1950partial}, $u_t=u_{xx}+u_x^2$, which models convection-diffusion phenomena. We initialize the library as $\Theta_0=\{1,t,x,u\}$ and set the maximum size of library to 10. LieDiscover identifies the best library as $\Theta^\ast=\{1,t,x,u,tx,t^2,x^2,tx^2,u^2,t^2u\}$, increasing the null-space dimension from $5$ to $6$. This increase indicates 6 infinitesimal generators. After pruning, the final compact library is $\Theta_f=\{1,t,x,u,tx,t^2,x^2\}$. Appendix E.2 provides their explicit mathematical expressions and visualization results. The six discovered generators correspond to the main Lie point symmetries: time translation, space translation, vertical translation in $u$, Galilean transformation, scaling transformation, and a projective-type symmetry. 

\noindent\textbf{Heat Equation. }
We consider the one-dimensional heat equation \citep{baron1822theorie}, $u_t=u_{xx}$, which models diffusion and possesses finite-dimensional Lie point symmetries as well as infinite-dimensional superposition symmetries. We initialize the library as $\Theta_0=\{1,t,x,u\}$ and set the maximum size of library to 10. LieDiscover identifies the best library as $\Theta^\ast=\{1,t,x,u,x^2,ux,tx,x^3,u^2\}$, increasing the null-space dimension from $7$ to $9$. Pruning then yields the final compact library $\Theta_f=\{1,t,x,u,x^2,ux,tx,x^3\}$. Table \ref{tab:selected_generators} lists the discovered nine generators, and Appendix E.2 provides the visualizations. Five discovered generators form the finite-dimensional geometric symmetries: $\mathbf v_7$ and $\mathbf v_8$ represent translations, $\mathbf v_2$ represents scaling, $\mathbf v_3$ represents Galilean transformations, and $\mathbf v_5$ represents solution scaling. The remaining four generators ($\mathbf v_1,\mathbf v_4,\mathbf v_6,\mathbf v_9$) can be unified into $\mathbf v_\psi=\psi(t,x)\partial_u$, where $\psi_t=\psi_{xx}$. This form represents the linear superposition principle: if $u=f(t,x)$ is a solution, then $f(t,x)+\varepsilon_g\psi(t,x)$ is also a solution, where $\varepsilon_g$ is the group parameter. Notably, LieDiscover successfully identifies the higher-order polynomial generator \(\mathbf v_9=(6tx+x^3)\partial_u\), whereas LieNLSD discovers fewer generators despite using a larger library.

\noindent\textbf{Damped Burgers' Equation. }
The damped Burgers’ equation \citep{montecinos2015analytic} extends the standard Burgers’ equation by adding a linear damping term. In our experiments, we consider $u_t=-uu_x+u_{xx}-\mu u,\ \mu=1$. We initialize the library as $\Theta_0=\{1,t,x,u\}$ and set the maximum size of library to 10. LieDiscover identifies the best library as $\Theta^\ast=\{1,t,x,u,e^{t},e^{-t}\}$, where the null-space dimension increases from $2$ to $3$, showing that the exponential terms are essential for representing the discovered symmetry. Pruning leaves the library unchanged, so $\Theta_f=\Theta^\ast$. We present the three generators in Table \ref{tab:selected_generators} and show the visualization results in Appendix E.2. 

LieDiscover obtains 3 infinitesimal generators. Among them, $\mathbf v_1=\partial_t$ and $\mathbf v_2=\partial_x$ correspond to time and space translations, respectively. More importantly, the adaptive library identifies the exponential generator $\mathbf v_3=e^{-t}\partial_x-e^{-t}\partial_u$, which combines a time-decaying spatial shift with a compensating transformation in $u$. This result suggests that LieDiscover can capture non-polynomial symmetry structures beyond the polynomial library used in LieNLSD.

\noindent\textbf{Forced Transport Equation. }
We consider the forced transport equation, $u_t=-u_x+\sin(x)$ \citep{de1998multiple}, a first-order linear PDE with a spatial forcing term that tests LieDiscover's ability to identify non-polynomial symmetries. We initialize the library as $\Theta_0=\{1,t,x,u\}$ and set the maximum library capacity to $12$. LieDiscover expands the candidate space using symbolic trees and identifies the best library as $\Theta^\ast=\{1,t,x,u,\cos(x),\sin(x),-\sin(t-x),\cos(t-x),\cos(t)\}$, increasing the null-space dimension from $2$ to $10$. Pruning then yields the final compact library $\Theta_f=\{1,t,x,u,\cos(x),\sin(x),-\sin(t-x),\cos(t-x)\}$. Table \ref{tab:selected_generators} lists the ten discovered generators, and Appendix E.2 provides the visualizations.

\begin{table*}[!t]
\centering
\small
\setlength{\tabcolsep}{4pt}
\begin{tabular}{lccccc}
\toprule
Dataset & FNO + $\emptyset$ & FNO + Ko et al. & FNO + LieNLSD & FNO + LieDiscover & FNO + GT \\
\midrule
Burg.
& $(2.33 \pm 1.07)\times 10^{-4}$
& $(1.93 \pm 0.75)\times 10^{-4}$
& $(1.31 \pm 0.12)\times 10^{-4}$
& $\mathbf{(1.24 \pm 0.33)\times 10^{-4}}$
& $(1.75 \pm 0.37)\times 10^{-4}$ \\
Heat
& $(1.07 \pm 0.10)\times 10^{-1}$
& $(7.33 \pm 1.07)\times 10^{-2}$
& $(5.25 \pm 0.31)\times 10^{-2}$
& $\mathbf{(5.02 \pm 0.10)\times 10^{-2}}$
& $(6.01 \pm 0.20)\times 10^{-2}$ \\
KdV
& $(1.74 \pm 0.10)\times 10^{-1}$
& $(1.51 \pm 0.01)\times 10^{-1}$
& $(1.27 \pm 0.12)\times 10^{-1}$
& $\mathbf{(1.26 \pm 0.14)\times 10^{-1}}$
& $(1.47 \pm 0.02)\times 10^{-1}$ \\
Damped
& $(2.97 \pm 0.52)\times 10^{-3}$
& $(2.91 \pm 0.49)\times 10^{-3}$
& $(2.51 \pm 0.00)\times 10^{-3}$
& $\mathbf{(1.78 \pm 0.06)\times 10^{-3}}$
& $(1.90 \pm 0.08)\times 10^{-3}$ \\
Forced
& $(1.65 \pm 0.07)\times 10^{-1}$
& $(1.68 \pm 0.07)\times 10^{-1}$
& $(1.61 \pm 0.01)\times 10^{-1}$
& $(1.59 \pm 0.02)\times 10^{-1}$
& $\mathbf{(1.24 \pm 0.02)\times 10^{-1}}$ \\
\bottomrule
\end{tabular}
\caption{Comparison of long-rollout test NMSE for FNO under
under different data augmentations. The best result is highlighted.}
\label{tab:fno_highlighted}
\end{table*}
\begin{figure*}[!t]
    \centering
    \includegraphics[width=0.98\linewidth]{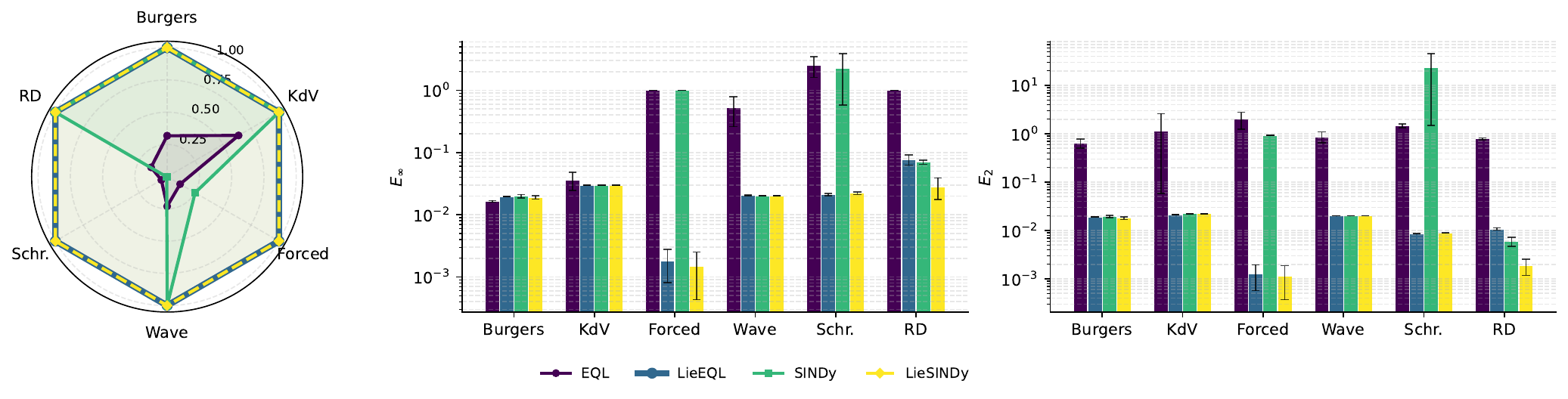}
    \caption{Performance of different PDE discovery methods. The three parts show TPR ($\uparrow$), $E_{\infty}$ ($\downarrow$), and $E_2$ ($\downarrow$), respectively.}
    \label{fig:pde_discovery_result}
\end{figure*}
LieDiscover identifies 10 infinitesimal generators. Among them, $\mathbf v_1$ and $\mathbf v_2$ represent time and vertical translations. $\mathbf v_3$ represents spatial translation with forcing compensation, and $\mathbf v_4$ represents solution scaling with a forcing correction. Furthermore, $\mathbf v_5, \mathbf v_6, \mathbf v_7$ can be unified into $\mathbf v_\psi=\psi(t-x)\partial_u$, describing vertical symmetries generated by an arbitrary solution of the homogeneous transport equation. $\mathbf v_8, \mathbf v_9, \mathbf v_{10}$ unify into $\mathbf v_\beta=\beta(t-x)\partial_t$, corresponding to characteristic-dependent time transformations. This result demonstrates that LieDiscover successfully captures basic trigonometric symmetries, specifically $\mathbf v_3$ and $\mathbf v_4$.

\noindent\textbf{Wave Equation. }The wave equation is a classical second-order PDE that describes the propagation of waves in physical systems \citep{strauss2007partial}. We consider the two-dimensional form \(u_{tt}=u_{xx}+u_{yy}\). For this equation, we initialize the library as $\Theta_0=\{1,t,x,y,u\}$ and set the maximum library capacity to $20$. LieDiscover identifies the best library as $\Theta^\ast=\{1,t,x,y,u,y^2,uy,tx^2,t^3,x^2,t^2,ty,ux,xy,tx,txy,t^2x,\\ty^2,y^6\}$, increasing the null-space dimension from 11 to 22. Pruning then yields the final compact library $\Theta_f=\{1,t,x,y,u,y^2,uy,tx^2,t^3,x^2,t^2,ty,ux,xy,tx,txy,ty^2\}$. Table \ref{tab:selected_generators} lists discovered generators, and Appendix E.2 provides the visualizations.

LieDiscover identifies 22 infinitesimal generators. Among them, $\mathbf v_3$ and $\mathbf v_4$ represent Lorentz transformations; $\mathbf v_8, \mathbf v_9, \mathbf v_{10}$ represent translations; $\mathbf v_5$ represents scaling; $\mathbf v_6$ represents spatial rotation; $\mathbf v_1$ and $\mathbf v_2$ represent special conformal transformations; and $\mathbf v_7$ scales the solution variable. The remaining generators ($\mathbf v_{11}$ to $\mathbf v_{22}$) can be unified into $\mathbf v_\psi=\psi(t,x,y)\partial_u$, where $\psi_{tt}=\psi_{xx}+\psi_{yy}$. Notably, LieDiscover finds higher-order polynomial generators like $\mathbf v_{21}$ and $\mathbf v_{22}$, demonstrating its capacity to discover rich polynomial structures through an adaptive library. 

Unlike LieNLSD, which relies on a fixed predefined library, LieDiscover adaptively expands its candidate space via symbolic tree generation to capture non-polynomial and higher-order polynomial generators. As shown in Table \ref{tab:selected_generators}, LieDiscover successfully uncovers exponential generators for damped Burgers', trigonometric generators for forced transport, and additional higher-order polynomials for heat and wave equations. Consequently, LieDiscover can explicitly recover open-form infinitesimal generators.

\subsection{RQ3: Applications to PDE Solving and Discovery}
\label{sec:Application}
\noindent \textbf{PDE Solving.}We compare FNO combined with LPSDA using symmetries obtained from three methods: the method of \citet{ko2024learning}, LieNLSD, and LieDiscover, as well as the ground-truth symmetries. Table~\ref{tab:fno_highlighted} presents the long-rollout test NMSE on 1D PDE tasks. Symmetry-based augmentations consistently outperform vanilla FNO. Notably, LPSDA powered by LieDiscover achieves the lowest NMSE on the Burgers', heat, KdV, and damped Burgers' equations, while maintaining competitive performance on forced transport. These improvements demonstrate that the symmetries discovered by LieDiscover substantially enhance FNO's long-term prediction accuracy. For instance, LieDiscover reduces the NMSE on the heat equation by 53.1\% (from $1.07\times10^{-1}$ to $5.02\times10^{-2}$) and on the damped Burgers' equation by 40.1\% (from $2.97\times10^{-3}$ to $1.78\times10^{-3}$).

\noindent\textbf{PDE Discovery. }
We integrate the symmetries discovered by LieDiscover as physical priors into SINDy and EQL, creating the symmetry-guided variants LieSINDy and LieEQL. Figure~\ref{fig:pde_discovery_result} illustrates the PDE discovery performance. Symmetry-guided methods consistently outperform their vanilla counterparts, yielding higher true positivity ratio (TPR) and lower coefficient errors. Specifically, LieEQL achieves perfect structural recovery across all equations and the best parameter estimation for the KdV and Schrödinger equations. Meanwhile, LieSINDy delivers the lowest coefficient errors on the Burgers', forced transport, and reaction-diffusion equations. These results demonstrate that the discovered symmetries serve as highly effective physical priors that enhance structural and parameter recovery.

\section{Conclusion}
In this work, we propose LieDiscover, a reinforcement learning framework for discovering symmetries from data. Unlike existing methods that rely on a predefined library, LieDiscover can explicitly recover open-form generators, including high-order polynomials and transcendental functions. It uses an encoder-decoder network as an actor to build symbolic expressions, while the critic estimates state values to guide the process. This design allows the framework to explore an infinite symbolic space without fixed constraints. Experiments show that these discovered symmetries significantly improve PDE solving and discovery. Future work will extend LieDiscover to discrete symmetries and more complex dynamical systems.

\appendix

\section{Appendix A: Related Work}

Recent studies have shown that symmetries can significantly improve data-driven PDE modeling. In neural PDE solvers, symmetries have been incorporated into PINNs \citep{akhound2023lie,arora2024invariant} and used for symmetry-based data augmentation \citep{brandstetter2022lie,ko2024learning}, leading to improved predictive performance. In data-driven PDE discovery \citep{rudy2017data,long2019pde}, symmetries are often used to reduce the search space of candidate equations. For example, \citet{yang2025discovering} construct candidate terms from differential invariants, while ICNet \citep{chen2024invariance} imposes physical symmetry constraints such as Galilean invariance.

Despite their success, these approaches depend on prior knowledge of physical symmetries. In contrast, LieDiscover learns symmetries directly from data and leverages the discovered generators to improve PDE solving and data-driven PDE discovery, leading to more accurate prediction and equation recovery.

\section{Appendix B: Detailed Proof}

\begin{theorem}
\label{thm:reward_condition}
Let $A$ and $B$ be two transitions with dimensional increments such that $\bar{d} \ge \Delta d_A > \Delta d_B \ge 1$. Assume that their delays satisfy $0\le h_A,h_B\le T_{\max}-1$ and their expression complexities satisfy $c_A,c_B\in[c_{\min},c_{\max}]$. A sufficient condition for $r_A>r_B$ to hold for every admissible pair is
\begin{equation}
\label{eq:theorem_condition}
\bar{d}\eta^{T_{\max}-1} - (\bar{d}-1) > \alpha \left( e^{-\lambda c_{\min}} - e^{-\lambda c_{\max}} \right),
\end{equation}
where $\bar{d}$ is the finite upper bound of one-step dimensional increments.
\end{theorem}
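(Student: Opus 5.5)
The plan is to bound the reward gap $r_A-r_B$ from below uniformly over all admissible delays and complexities, and to show that this lower bound is exactly the left-hand side of \eqref{eq:theorem_condition} minus its right-hand side. By the reward definition \eqref{eq:reward equation}, both transitions have $\Delta d>0$. Writing $h=t-t_n-1$ for the delay, we have
\begin{equation*}
r_A-r_B=\bigl(\Delta d_A\eta^{h_A}-\Delta d_B\eta^{h_B}\bigr)+\alpha\bigl(e^{-\lambda c_A}-e^{-\lambda c_B}\bigr).
\end{equation*}
I will treat the two brackets separately and take the worst case in each.

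The simplicity term is easy. Since $\alpha,\lambda>0$ and $c\mapsto e^{-\lambda c}$ is decreasing on $[c_{\min},c_{\max}]$, its smallest value occurs at $c_A=c_{\max}$, $c_B=c_{\min}$. This gives
\begin{equation*}
\alpha\bigl(e^{-\lambda c_A}-e^{-\lambda c_B}\bigr)\ge-\alpha\bigl(e^{-\lambda c_{\min}}-e^{-\lambda c_{\max}}\bigr).
\end{equation*}

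For the dimension term I will use that $\eta$ acts as a discount factor, $0<\eta\le 1$. This is implicit in the design, and the practical choice $\eta>\sqrt[T_{\max}-1]{(\bar d-1)/\bar d}$ is consistent with it. Under this assumption $\eta^{h}$ is nonincreasing in $h$, so $\eta^{h_A}\ge\eta^{T_{\max}-1}$ and $\eta^{h_B}\le 1$. Hence the first bracket is at least $\Delta d_A\eta^{T_{\max}-1}-\Delta d_B$.

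The increments are integers, so $\Delta d_A>\Delta d_B$ gives $\Delta d_B\le\Delta d_A-1$. Therefore the first bracket is at least
\begin{equation*}
\Delta d_A\eta^{T_{\max}-1}-\Delta d_A+1=1-\Delta d_A\bigl(1-\eta^{T_{\max}-1}\bigr).
\end{equation*}
This expression is nonincreasing in $\Delta d_A$ because $1-\eta^{T_{\max}-1}\ge0$. Using $\Delta d_A\le\bar d$, it is bounded below by $1-\bar d\bigl(1-\eta^{T_{\max}-1}\bigr)=\bar d\eta^{T_{\max}-1}-(\bar d-1)$.

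Adding the two lower bounds gives
\begin{equation*}
r_A-r_B\ge\bar d\eta^{T_{\max}-1}-(\bar d-1)-\alpha\bigl(e^{-\lambda c_{\min}}-e^{-\lambda c_{\max}}\bigr).
\end{equation*}
By \eqref{eq:theorem_condition} the right-hand side is strictly positive, so $r_A>r_B$ for every admissible pair, which proves sufficiency.

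The main obstacle is the joint worst case in the dimension term. The adversarial choices are $h_A$ maximal, $h_B=0$, $\Delta d_B=\Delta d_A-1$ and $\Delta d_A=\bar d$, and they must be justified as simultaneous minimizers. This rests on the monotonicity of $\eta^h$, which needs $\eta\le1$, and on integrality of the increments. I will state the assumption $\eta\in(0,1]$ explicitly. If $\eta>1$ were allowed, the extremal delays would swap ($h_A=0$, $h_B=T_{\max}-1$) and the bound would take a different form. Note also that for $\eta\le 1$, \eqref{eq:theorem_condition} itself forces $\eta^{T_{\max}-1}>(\bar d-1)/\bar d$, which is consistent with the stated practical approximation.
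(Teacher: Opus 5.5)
Your proof is correct and follows essentially the same route as the paper's. Both split off the simplicity term, bound the delays using $0<\eta\le 1$, and use integrality of the increments; the only cosmetic difference is that you eliminate $\Delta d_B$ and then use $\Delta d_A\le\bar d$, whereas the paper eliminates $\Delta d_A$ and then uses $\Delta d_B\le\bar d-1$. Stating the assumption $\eta\le 1$ explicitly is right, since the paper invokes it in the proof but not in the theorem statement, and the stated condition is not sufficient without it.
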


\begin{proof}
Define
\begin{equation*}
    \delta_c = e^{-\lambda c_{\min}} - e^{-\lambda c_{\max}}.
\end{equation*}
Since $c_{\min} \leq c_{\max}$ and $\lambda > 0$, we have $\delta_c \geq 0$.

For brevity, write
\begin{equation*}
    r_A = r(\Delta d_A, h_A, c_A), \qquad r_B = r(\Delta d_B, h_B, c_B).
\end{equation*}
Their difference is
\begin{equation*}
    r_A - r_B = \Delta d_A\eta^{h_A} - \Delta d_B\eta^{h_B} + \alpha \left( e^{-\lambda c_A} - e^{-\lambda c_B} \right).
\end{equation*}

Let $H=T_{\max}-1$. Because $0 < \eta \leq 1$ and $0 \leq h_A, h_B \leq H$,
\begin{equation*}
    \eta^{h_A} \geq \eta^H, \qquad \eta^{h_B} \leq 1.
\end{equation*}
Moreover, since $c \mapsto e^{-\lambda c}$ is decreasing,
\begin{equation*}
    e^{-\lambda c_A} \geq e^{-\lambda c_{\max}}, \qquad e^{-\lambda c_B} \leq e^{-\lambda c_{\min}}.
\end{equation*}
Consequently,
\begin{equation}
\label{eq:reward_difference_lower_bound}
    r_A - r_B \geq \Delta d_A\eta^H - \Delta d_B - \alpha\delta_c.
\end{equation}

The dimension increments are integers and satisfy $\Delta d_A > \Delta d_B$. Hence,
\begin{equation*}
    \Delta d_A \geq \Delta d_B + 1.
\end{equation*}
Applying this inequality to Equation~\eqref{eq:reward_difference_lower_bound} gives
\begin{align*}
    r_A - r_B &\geq (\Delta d_B + 1)\eta^H - \Delta d_B - \alpha\delta_c \\
    &= \eta^H + \Delta d_B(\eta^H - 1) - \alpha\delta_c.
\end{align*}

Since $\eta^H - 1 \leq 0$ and $\Delta d_B \leq \bar d - 1$, it follows that
\begin{equation*}
    \Delta d_B(\eta^H - 1) \geq (\bar d - 1)(\eta^H - 1).
\end{equation*}
Therefore,
\begin{align*}
    r_A - r_B &\geq \eta^H + (\bar d - 1)(\eta^H - 1) - \alpha\delta_c \\
    &= \bar d\eta^H - (\bar d - 1) - \alpha\delta_c.
\end{align*}

By Equation~\eqref{eq:theorem_condition}, the final expression is strictly positive. Thus,
\begin{equation*}
    r_A - r_B > 0,
\end{equation*}
and hence $r_A > r_B$ for every admissible pair of transitions.
\end{proof}

In our experiments, we use the practical approximation $\eta > \sqrt[T_{\max}-1]{(\bar{d}-1)/\bar{d}}$. This approximation omits the correction induced by the simplicity bonus because its weight \(\alpha\) is chosen to be small. It preserves the dominance of the dimension-increment reward while using expression simplicity only as a secondary preference; unlike Equation~\eqref{eq:theorem_condition}, it is not presented as a universal theoretical guarantee.

\section{Appendix C: Detailed Algorithm}
\subsection{C.1 The Construction of $\Theta_n$}
In LieNLSD~\cite{hu2025explicit}, the prolonged library \(\Theta_n\) is manually derived and fixed for a pre-defined candidate library \(\Theta\). In contrast, LieDiscover adaptively expands \(\Theta\) during the search process, so the environment must automatically derive the corresponding \(\Theta_n\) for each updated library.
\begin{theorem}
\label{thm:prolongation}
Let $G$ be a Lie group acting on $X \times U = \mathbb{R}^p \times \mathbb{R}^q$, with its corresponding Lie algebra $\mathfrak{g}$. Assume that the infinitesimal group action of $\mathbf{v} \in \mathfrak{g}$ takes the following form:
\begin{align*}
\mathbf{v} &= W\Theta(x, u) \cdot \nabla \\
&= [W_1 \quad W_2 \quad \cdots \quad W_{p+q}]^\top \Theta(x, u) \cdot \nabla \\
&= \sum_{i=1}^{p} \Theta(x, u)^\top W_i \frac{\partial}{\partial x^i} + \sum_{\alpha=1}^{q} \Theta(x, u)^\top W_{p+\alpha} \frac{\partial}{\partial u^\alpha},
\end{align*}
where $W \in \mathbb{R}^{(p+q) \times r}$ and $\Theta(x, u) \in \mathbb{R}^{r \times 1}$. Then, the $n$-th prolongation of $\mathbf{v}$ is:
\begin{equation}
\label{eq:eq_1}
\text{pr}^{(n)}\mathbf{v} = \mathbf{v} + \sum_{\alpha=1}^{q} \sum_{J} \phi_{\alpha}^J(x, u^{(n)}) \frac{\partial}{\partial u_J^\alpha},
\end{equation}
where $J = (j_1, \dots, j_k)$, with $j_i \in \{1, \dots, p\}$ and $k \in \{1, \dots, n\}$. The coefficients $\phi_{\alpha}^J$ are given by:
\begin{equation}
\label{eq:eq_2}
\phi_{\alpha}^J(x, u^{(n)}) = -\sum_{i=1}^{p} \sum_{I \subset J} u_{J \setminus I, i}^{\alpha} D_I \Theta^\top W_i + D_J \Theta^\top W_{p+\alpha},
\end{equation}
where $u_{J, i}^{\alpha} = \frac{\partial u_J^\alpha}{\partial x^i} = \frac{\partial^{k+1} u^\alpha}{\partial x^i \partial x^{j_1} \dots \partial x^{j_k}}$, $J \setminus I$ denotes the set difference, and $D_J$ represents the total derivative.
\end{theorem}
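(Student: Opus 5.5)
The plan is to reduce the statement to Olver's general prolongation formula and then substitute the library parameterization. Write the vector field in the classical form $\mathbf v=\sum_{i=1}^p \xi^i(x,u)\partial_{x^i}+\sum_{\alpha=1}^q \phi_\alpha(x,u)\partial_{u^\alpha}$ with $\xi^i=\Theta^\top W_i$ and $\phi_\alpha=\Theta^\top W_{p+\alpha}$. The first displayed equality of the theorem is then just a rewriting of $W\Theta\cdot\nabla$ row by row, which is immediate.

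The substantive step is to establish, for general $\xi^i,\phi_\alpha$, the coefficient formula
\begin{equation*}
\phi_\alpha^J = D_J\phi_\alpha-\sum_{i=1}^p\sum_{\emptyset\neq I\subseteq J} u^\alpha_{J\setminus I,i}\,D_I\xi^i .
\end{equation*}
Here $I$ ranges over nonempty sub-collections of the positions $\{1,\dots,k\}$ of $J$, counted with multiplicity. I read the statement's ``$I\subset J$'' this way; the $I=\emptyset$ term must be excluded, or it is exactly the term that cancels below.

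I will derive this from the characteristic form, which I take as known (Olver's Theorem 2.36): with $Q_\alpha=\phi_\alpha-\sum_i\xi^i u^\alpha_i$, one has $\phi_\alpha^J=D_J Q_\alpha+\sum_i \xi^i u^\alpha_{J,i}$. If I prefer not to cite it, I would prove it by induction on $|J|$. The base case is the first prolongation, and the inductive step uses the recursion $\phi_\alpha^{J,j}=D_j\phi_\alpha^J-\sum_i u^\alpha_{J,i}D_j\xi^i$, which itself follows from requiring the prolonged flow to preserve the contact conditions $du^\alpha_J=\sum_j u^\alpha_{J,j}dx^j$.

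Next I expand $D_J(\xi^i u^\alpha_i)$ by the iterated Leibniz rule for total derivatives. Since $D_J=D_{j_1}\cdots D_{j_k}$ and the $D_j$ commute, each factor $D_{j_m}$ falls either on $\xi^i$ or on $u^\alpha_i$. This gives $\sum_{I\subseteq J} D_I\xi^i\, D_{J\setminus I}u^\alpha_i=\sum_{I\subseteq J}D_I\xi^i\,u^\alpha_{J\setminus I,i}$, using the symmetry of mixed partials. The $I=\emptyset$ term $\xi^i u^\alpha_{J,i}$ cancels against the explicit $+\sum_i\xi^iu^\alpha_{J,i}$, leaving the displayed formula. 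Finally, because $W$ is constant, $D_I(\Theta^\top W_i)=(D_I\Theta)^\top W_i$, and substituting yields Equation~\eqref{eq:eq_2}. This also makes visible that every coefficient is linear in $\operatorname{vec}(W)$, which is what justifies the form of Equation~\eqref{eq:prolongation_linear}.

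I expect the main obstacle to be bookkeeping rather than ideas: handling the combinatorics of sub-multi-indices correctly. Repeated indices in $J$ must be counted via positions so that the Leibniz expansion has the right multiplicities, and the convention on $I$ must be fixed precisely so that the $I=\emptyset$ cancellation is consistent with the stated formula. If the direct Leibniz route becomes confusing, I would instead run the induction on $k$ directly on the target formula. In that version, applying $D_j$ to the formula for $\phi^J_\alpha$ splits each subset term into ``$j$ added to $I$'' and ``$j$ added to $J\setminus I$'' pieces. The latter, when $I=\emptyset$, reproduce exactly the $-u^\alpha_{J,i}D_j\xi^i$ correction from the recursion.
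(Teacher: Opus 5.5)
Your proof is correct. Note that the paper never proves this theorem. It is imported from LieNLSD, and ultimately from Olver's general prolongation formula, and stated without proof in Appendix C.1. What the paper relies on computationally is not the closed form \eqref{eq:eq_2} but the recursion $\phi_\alpha^{J,k}=D_k\phi_\alpha^{J}-\sum_i u^\alpha_{J,i}D_k\xi^i$, which Algorithm~\ref{alg:recursive_coeff} implements block by block in the $W_i$.

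Your main route (characteristic form plus the iterated Leibniz rule over positional subsets) gives the explicit non-recursive formula directly. It makes the linearity in $\operatorname{vec}(W)$ and the binomial multiplicities transparent. It also fixes the only consistent reading of the statement's $I\subset J$: nonempty, counted by positions, with $I=J$ allowed. For example, for $p=q=1$ it yields $\phi^{xx}=D_x^2\phi-2u_{xx}D_x\xi-u_xD_x^2\xi$, and a value-based \emph{set difference} reading would lose the factor $2$. The recursion instead never enumerates sub-multi-indices, so repeated indices take care of themselves, which is why it is the natural thing to implement.

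Your fallback induction is exactly the bridge certifying that Algorithm~\ref{alg:recursive_coeff} outputs \eqref{eq:eq_2}, but its bookkeeping is mislabeled. After you apply $D_j$ to $\phi^J_\alpha$, the three kinds of term match the nonempty subsets $I'$ of $(J,j)$ as follows:
\begin{itemize}
\item The pieces with $j$ added to $J\setminus I$ give the subsets $I'$ that omit the new position.
\item The pieces with $j$ added to $I$ give the subsets containing the new position together with an earlier one.
\item The recursion's correction $-\sum_i u^\alpha_{J,i}D_j\xi^i$ supplies the one remaining subset $I'=\{j\}$, i.e.\ $j$ added to an \emph{empty} $I$.
\end{itemize}
That last term is absent at level $J$ precisely because $I=\emptyset$ is excluded there; it does not come from a piece of the $J\setminus I$ type.
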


Based on Theorem~\ref{thm:prolongation}, we design a recursive algorithm to automatically construct \(\Theta_n\) from a given library \(\Theta\). Algorithm~\ref{alg:theta_n_construction} builds \(\Theta_n\) by processing each target derivative term, and Algorithm~\ref{alg:recursive_coeff} computes the corresponding coefficient blocks recursively according to its derivative order. These abstract blocks are then instantiated by the total derivatives of the current library \(\Theta\). In this way, LieDiscover can update \(\Theta_n\) automatically whenever the library $\Theta$ changes, enabling symmetry evaluation during adaptive library search.

\begin{algorithm}[!h]
\caption{Recursive Construction of $\Theta_n$ from $\Theta$}
\label{alg:theta_n_construction}
\begin{algorithmic}[1]
\REQUIRE Candidate library $\Theta=[\theta_1,\theta_2,\dots,\theta_r]$, independent variables $\mathbf{x}$, dependent variables $\mathbf{u}$, target derivative set $\mathcal{Q}$
\ENSURE Expanded matrix $\Theta_n$

\STATE Initialize an empty matrix $\Theta_n$

\FOR{each target term $q \in \mathcal{Q}$}
    \STATE Parse $q$ as $u^\alpha_J$, where $u^\alpha$ is the dependent variable and $J$ is the derivative multi-index
    \STATE Compute abstract coefficients $\{W_{x_i}\}$ and $W_{u^\alpha}$ recursively:
    \[
    ( \{W_{x_i}\}, W_{u^\alpha}) \leftarrow \textsc{RecursiveCoeff}(u^\alpha, J)
    \]
    \STATE Initialize row blocks $\mathcal{B}=\emptyset$

    \FOR{each independent variable $x_i \in \mathbf{x}$}
        \STATE Replace each abstract term $D_K\Theta^T$ in $W_{x_i}$ with the concrete derivative $D_K\Theta$
        \STATE Append the resulting block to $\mathcal{B}$
    \ENDFOR

    \FOR{each dependent variable $u^\beta \in \mathbf{u}$}
        \IF{$u^\beta = u^\alpha$}
            \STATE Replace each $D_K\Theta^T$ in $W_{u^\alpha}$ with $D_K\Theta$
            \STATE Append the resulting block to $\mathcal{B}$
        \ELSE
            \STATE Append a zero block to $\mathcal{B}$
        \ENDIF
    \ENDFOR
    \STATE Concatenate all blocks in $\mathcal{B}$ to form one row of $\Theta_n$
\ENDFOR

\STATE \RETURN $\Theta_n$
\end{algorithmic}
\end{algorithm}

\begin{algorithm}[!h]
\caption{RecursiveCoeff$(u^\alpha,J)$}
\label{alg:recursive_coeff}
\begin{algorithmic}[1]
\REQUIRE Dependent variable $u^\alpha$, derivative multi-index $J$
\ENSURE Abstract coefficients $\{W_{x_i}\}$ and $W_{u^\alpha}$

\IF{$J=\emptyset$}
    \STATE $W_{x_i} \leftarrow 0$ for all $x_i$
    \STATE $W_{u^\alpha} \leftarrow \Theta^T$
    \STATE \RETURN $\{W_{x_i}\}, W_{u^\alpha}$
\ENDIF

\STATE Let $J=J'k$, where $k$ is the last derivative direction
\STATE $(\{\bar{W}_{x_i}\}, \bar{W}_{u^\alpha}) \leftarrow \textsc{RecursiveCoeff}(u^\alpha,J')$
\STATE $W_{u^\alpha} \leftarrow D_k \bar{W}_{u^\alpha}$

\FOR{each independent variable $x_i$}
    \STATE $W_{x_i} \leftarrow D_k \bar{W}_{x_i} - u^\alpha_{J'i} D_k\Theta^T$
\ENDFOR

\STATE \RETURN $\{W_{x_i}\}, W_{u^\alpha}$
\end{algorithmic}
\end{algorithm}

\subsection{C.2 Adaptive Singular Value Decomposition}
In Lie symmetry discovery, Singular Value Decomposition (SVD) is used to estimate the null space of the linear constraint matrix. A common practice is to manually set a fixed threshold and treat singular values below this threshold as zero. For example, LieNLSD\citep{hu2025explicit} requires a manually specified threshold to determine the null dimension. However, such a fixed threshold may be inappropriate in our setting, because the scale of the constraint matrix can vary significantly across different candidate libraries during the search process.

To address this issue, we introduce an adaptive SVD strategy. Instead of using a manually fixed threshold, we determine the threshold from the singular-value spectrum itself. Specifically, we locate the largest spectral gap within a prescribed threshold range and set the threshold between the two adjacent singular values. This allows the null dimension to be selected according to the current candidate library. In addition, we impose a condition-number constraint to reject numerically unstable candidates. The full procedure is summarized in Algorithm~\ref{alg:adaptive_svd_filtering}.

\begin{algorithm}[!t]
\caption{Adaptive SVD}
\label{alg:adaptive_svd_filtering}
\begin{algorithmic}[1]
\REQUIRE Linear blocks $\{C_i\}_{i=1}^{M}$, threshold bounds $[\gamma_{\min},\gamma_{\max}]$, condition bound $\kappa_{\max}$, $\delta>0$
\ENSURE Null dimension $d$, null basis $Q$, threshold $\tau$

\STATE Compute the Gram matrix:
\[
G=C^\top C=\sum_{i=1}^{M}C_i^\top C_i .
\]

\STATE Perform SVD:
\[
G=V\Sigma^2V^\top,
\qquad
\sigma_j=\sqrt{\max(\Sigma_j^2,0)}.
\]

\STATE Sort $\{\sigma_j\}$ in descending order.

\IF{a fixed threshold $\tau$ is not given}
    \STATE Compute spectral gaps:
    \[
    g_j=\log_{10}(\sigma_j)-\log_{10}(\sigma_{j+1}).
    \]
    \STATE Set candidate thresholds:
    \[
    \tau_j=\sqrt{\sigma_j\sigma_{j+1}}.
    \]
    \STATE Select the valid $\tau_j\in[\gamma_{\min},\gamma_{\max}]$ with the largest gap as $\tau$.
\ENDIF

\STATE Compute the null dimension:
\[
d=|\{j:\sigma_j<\tau\}|.
\]

\STATE Compute the active condition number:
\[
\kappa=
\frac{
\max\{\sigma_j:\sigma_j\ge\max(\tau,\delta)\}
}{
\max\left(\min\{\sigma_j:\sigma_j\ge\max(\tau,\delta)\},\delta\right)
}.
\]

\IF{$\kappa>\kappa_{\max}$}
    \STATE Reject the candidate.
\ELSE
    \STATE Accept the candidate and select the null basis:
    \[
    Q=V_{[:,\,\sigma_j<\tau]}.
    \]
\ENDIF

\RETURN $d,Q,\tau,\kappa$
\end{algorithmic}
\end{algorithm}

\subsection{C.3 Basis Sparsification and Pruning}

After obtaining the null-space basis \(Q\) via SVD, the resulting infinitesimal generators are not necessarily sparse, since any rotation of the basis spans the same Lie algebra subspace. To obtain more interpretable generators, we use the Linearized Alternating Direction Method with Adaptive Penalty (LADMAP)~\cite{lin2011linearized} to find a sparse basis within the same subspace. This step preserves the span of \(Q\) while encouraging most entries of \(QP\) to be zero.

We then further prune the discovered library by removing low-support symbolic terms. Specifically, terms with small coefficient support are tested one by one, and a term is removed only if the null-space dimension is preserved and the numerical conditioning does not degrade significantly. This produces a compact library and sparse infinitesimal generators, as summarized in Algorithm~\ref{alg:sparsification_pruning}.

\begin{algorithm}[!h]
\caption{Basis Sparsification and Pruning}
\label{alg:sparsification_pruning}
\begin{algorithmic}[1]
\REQUIRE Basis matrix $Q$, best searched library $\Theta^\ast$, mandatory terms $\Theta_0$, step sizes $\eta_P,\eta_Z>0$, conditioning tolerance $\gamma\ge1$
\ENSURE Final pruned library $\Theta_f$, sparse basis $Q_{\mathrm{final}}^\ast$

\STATE Initialize $\epsilon_1>0,\epsilon_2>0,\beta_{\max}\gg\beta_0>0,\rho_0\ge1$.
\STATE Set $P_0=I_d$, $Z_0=QP_0$, $\Lambda_0=0$, and $k\leftarrow0$.

\REPEAT
    \STATE Update $P_{k+1}=UV^\top$, where $U\Sigma V^\top$ is the SVD of
    \[
    P_k-\frac{Q^\top(\Lambda_k+\beta_k(QP_k-Z_k))}{\beta_k\eta_P}.
    \]
    \STATE Update
    \[
    Z_{k+1}
    =
    \mathcal{S}_{(\beta_k\eta_Z)^{-1}}
    \left(
    Z_k+
    \frac{\Lambda_k+\beta_k(QP_{k+1}-Z_k)}{\beta_k\eta_Z}
    \right),
    \]
    where $\mathcal{S}_{\epsilon}(x)=\mathrm{sgn}(x)\max(|x|-\epsilon,0)$.
    \STATE Update $\Lambda_{k+1}=\Lambda_k+\beta_k(QP_{k+1}-Z_{k+1})$.
    \STATE Compute the convergence residual
    \[
    \chi_k=\beta_k\max\!\left(
    \sqrt{\eta_P}\lVert P_{k+1}-P_k\rVert_\infty,
    \sqrt{\eta_Z}\lVert Z_{k+1}-Z_k\rVert_\infty
    \right).
    \]
    \STATE Set $\rho_k\leftarrow\rho_0$ if $\chi_k<\epsilon_2$; otherwise set $\rho_k\leftarrow1$.
    \STATE Update $\beta_{k+1}=\min(\beta_{\max},\rho_k\beta_k)$.
    \STATE $k\leftarrow k+1$.
\UNTIL{$\lVert QP_k-Z_k\rVert_\infty<\epsilon_1$ and $\chi_{k-1}\le\epsilon_2$}

\STATE $Q^\ast\leftarrow QP_k$.

\STATE Reshape $Q^\ast$ into coefficient tensors $\{W_i\}$.
\STATE Compute support score for each term $\theta_j\in\Theta^\ast$:
\[
s_j=\left(\sum_i\|W_i[:,j]\|_2^2\right)^{1/2}.
\]

\STATE Sort non-mandatory terms $\Theta^\ast\setminus\Theta_0$ by ascending $s_j$.
\STATE Set $\Theta_f\leftarrow\Theta^\ast$ and evaluate $(d_{\mathrm{ref}},\kappa_{\mathrm{ref}})$ on $\Theta_f$.

\FOR{each candidate term $\theta_j$ in sorted order}
    \STATE $\Theta_{\mathrm{trial}}\leftarrow\Theta_f\setminus\{\theta_j\}$.
    \STATE Evaluate $(d_{\mathrm{trial}},\kappa_{\mathrm{trial}})$ on $\Theta_{\mathrm{trial}}$ by final SVD.
    \IF{$d_{\mathrm{trial}}=d_{\mathrm{ref}}$ and $\kappa_{\mathrm{trial}}\le\gamma\kappa_{\mathrm{ref}}$}
        \STATE $\Theta_f\leftarrow\Theta_{\mathrm{trial}}$.
    \ENDIF
\ENDFOR

\STATE Recompute final SVD on $\Theta_f$ and sparsify the resulting basis as $Q_{\mathrm{final}}^\ast$.
\RETURN $\Theta_f,Q_{\mathrm{final}}^\ast$
\end{algorithmic}
\end{algorithm}

\subsection{C.4 Overall Algorithm of LieDiscover}
Algorithm \ref{alg:overall_framework} outlines the complete LieDiscover framework.

\begin{algorithm}[!t]
\caption{LieDiscover}
\label{alg:overall_framework}
\begin{algorithmic}[1]
\REQUIRE Dataset $\mathcal{D}$, initial library $\Theta_0$, horizon $T_{\max}$, batch size $B$, elite fraction $\epsilon$, iterations $K$
\ENSURE Final compact library $\Theta_f$ and infinitesimal generators $\{W_j\}_{j=1}^{d^\ast}$

\STATE Fit neural surrogate $\widehat{F}$ on $\mathcal{D}$ and obtain Jacobian $J_F$ via automatic differentiation.
\STATE Initialize actor $\pi_\theta$, critic $V_\phi$, best return $R_{\mathrm{best}} \leftarrow -\infty$, and $\Theta^\ast\leftarrow\Theta_0$.

\FOR{iteration $k=1,\dots,K$}
    \FOR{episode $i=1,\dots,B$}
        \STATE $s_0^{(i)} \leftarrow (\Theta_0, \emptyset)$
        \FOR{step $t=0,\dots,T_{\max}-1$}
            \STATE Sample expression tree $\tau_t^{(i)} \sim \pi_\theta(\cdot\mid s_t^{(i)})$.
            \STATE $(\widehat d_t, \kappa_t) \leftarrow \mathrm{AdaptiveSVD}(\Theta_t^{\mathrm{acc},(i)} \cup \{\tau_t^{(i)}\}, J_F)$.
            \STATE $(s_{t+1}^{(i)}, r_t^{(i)}) \leftarrow \mathrm{EnvStep}(s_t^{(i)}, \tau_t^{(i)}, \widehat d_t, \kappa_t)$.
        \ENDFOR
        \STATE $R^{(i)} \leftarrow \sum_{t=0}^{T_{\max}-1} r_t^{(i)}$. 
        \STATE Update $\Theta^\ast$ if $R^{(i)} > R_{\mathrm{best}}$.
    \ENDFOR
    \STATE $\mathcal{E}_\epsilon \leftarrow \{i \mid R^{(i)} \ge \mathrm{Quantile}_{1-\epsilon}(R)\}$.
    \STATE Update $\theta, \phi$ via risk-seeking actor-critic loss $\mathcal{L}$ on $\mathcal{E}_\epsilon$.
\ENDFOR

\STATE $(d^\ast, Q) \leftarrow \mathrm{AdaptiveSVD}(\Theta^\ast, J_F)$.
\STATE $(\Theta_f,Q_{\mathrm{final}}^\ast) \leftarrow \mathrm{SparsifyAndPrune}(Q,\Theta^\ast,\Theta_0)$.
\RETURN $\Theta_f$ and $\{W_j\}_{j=1}^{d^\ast} \leftarrow \mathrm{Reshape}(Q_{\mathrm{final}}^\ast)$.
\end{algorithmic}
\end{algorithm}
\section{Appendix D: Detailed Methods}

\subsection{Appendix D.1: Jacobian Computation}
 Given discrete samples
$(x_i,u_i^{(n)})$, let $y_i = D_{\boldsymbol{\beta}_0}u_i$ denote the distinguished temporal derivative, and let
$$
z_i =
\left(
D_{\boldsymbol{\beta}_1}u_i,\ldots,
D_{\boldsymbol{\beta}_s}u_i
\right)
$$
collect the spatial derivatives and other state variables selected as
inputs to the surrogate model. Independent variables are included in
$z_i$ when the governing equation depends on them explicitly.

Before policy optimization, a fully connected neural network
$\widehat f_\phi$ is trained offline by minimizing
\begin{equation}
\mathcal{L}_{\mathrm{sur}}
=
\frac{1}{N}\sum_{i=1}^{N}
\left\|
\widehat f_\phi(z_i)-y_i
\right\|_2^2.
\end{equation}
The learned governing residual is then represented as
\begin{equation}
\label{eq:pde_form}
\widehat F_\phi(z_i,y_i)
=
\widehat f_\phi(z_i)-y_i
=
0.
\end{equation}
Accordingly, automatic differentiation gives the sample-wise residual
Jacobian
\begin{equation}
J_{\widehat F,i}
=
\left[
J_{\widehat f_\phi}(z_i),-I
\right].
\end{equation}

\begin{figure}[!t]
    \centering
    \includegraphics[width=0.45\linewidth]{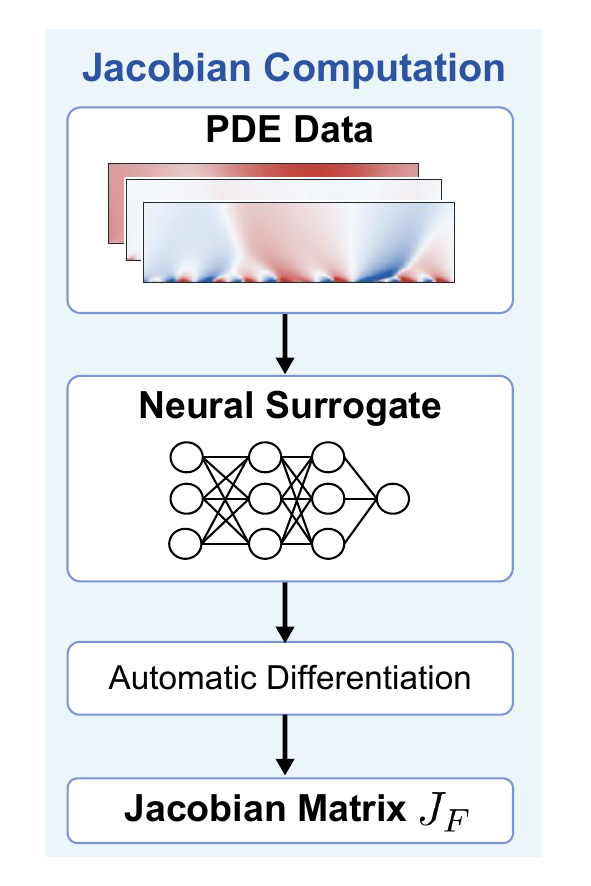}
    \caption{Detailed Jacobian computation steps.}
    \label{fig:data}
\end{figure}
The surrogate network remains frozen during policy training.
The sample-wise Jacobians are computed when the environment state is
initialized and are cached for reuse when evaluating different
candidate extensions. When fixed-sample caching is enabled, the same
initialized state can also be reused across episodes. The detailed
Jacobian computation is illustrated in Figure~\ref{fig:data}.

\subsection{Appendix D.2: PDE Discovery}

In this subsection, we detail how to apply the discovered infinitesimal generators to downstream PDE discovery.

For an unknown evolution equation of the form $u_t = f(x, u, D_{\boldsymbol{\beta}_1}u, \dots, D_{\boldsymbol{\beta}_s}u)$, it must satisfy the standard symmetry condition. That is, the prolonged vector field $\operatorname{pr}^{(n)}\mathbf{v}_a$ acting on the equation must vanish:
\begin{equation}
    \operatorname{pr}^{(n)}\mathbf{v}_a(u_t - f)=0, \qquad \forall \mathbf{v}_a\in\mathcal{B}.
\end{equation}

Instead of directly searching for the complex nonlinear function $f$ from scratch, we use this symmetry constraint to adaptively estimate its polynomial degree beforehand. We perform an automated degree sweep starting from $d=1$. At each step $d$, we build a candidate term library $\mathcal{C}_d$. This library includes spatial derivatives, their polynomial products up to degree $d$, and basic transcendental functions (such as $\sin, \cos, \exp$) if suggested by prior knowledge.

By evaluating the prolonged actions $\operatorname{pr}^{(n)}\mathbf{v}_a$ on the dataset, we convert the symmetry condition into a simple linear system:
\begin{equation}
    \mathbf{A}_{u_t} \approx \mathbf{A}_{\mathcal{C}_d}\mathbf{c}.
\end{equation}
Here, $\mathbf{A}_{u_t}$ and $\mathbf{A}_{\mathcal{C}_d}$ represent the evaluated results of the prolonged vector field acting on $u_t$ and the candidate terms in $\mathcal{C}_d$, respectively. We measure the fitting error using the relative residual:
\begin{equation}
    \epsilon(d) = \frac{\min_{\mathbf{c}} \left\| \mathbf{A}_{u_t} - \mathbf{A}_{\mathcal{C}_d}\mathbf{c} \right\|_2}{\left\|\mathbf{A}_{u_t}\right\|_2+\varepsilon}.
\end{equation}

We compare this residual against a predefined tolerance threshold $\tau$. If $\epsilon(d) \le \tau$, the search stops, and we set $d^\ast = d$ as the estimated nonlinear degree of the target PDE. If the residual is too large, we increment the degree to $d+1$ and repeat the linear test. This iterative process ensures the target degree $d^\ast$ is automatically determined by the data, rather than given in advance.

\subsubsection{LieEQL}
This estimated degree $d^\ast$ directly determines the architecture of the symbolic network. Since each recursive product layer doubles the reachable polynomial degree, we define the minimum number of product layers $L$ as:
\begin{equation}
    L = \min \{L\ge 0 : 2^L \ge d^\ast\}.
\end{equation}
For example, $d^\ast=1$ uses a linear architecture without product layers, while higher degrees require additional product layers. Furthermore, we use the linear test results to determine non-polynomial operators. Let $\mathcal{C}^\ast=\mathcal{C}_{d^\ast}$ be the selected library. We check if the transcendental candidate terms in $\mathcal{C}^\ast$ have non-zero coefficients. If so, we collect them into an operator set:
\begin{equation}
    \mathcal{F}_{\mathrm{trans}} = \{\rho \in \{\sin, \cos, \exp\} : \rho \text{ is required in } \mathcal{C}^\ast\}.
\end{equation}
If no such function is needed, then $\mathcal{F}_{\mathrm{trans}}=\emptyset$.
Finally, the symbolic network is configured with the following operator sets across layers:
\begin{equation}
    \mathcal{O}^\ell =
    \begin{cases}
        \{\mathrm{id}\} \cup \mathcal{F}_{\mathrm{trans}}, & L=0, \\
        \{\mathrm{id}, \times\} \cup \mathcal{F}_{\mathrm{trans}}, & \ell=1,\; L\ge 1, \\
        \{\mathrm{id}, \times\}, & 2\le \ell\le L.
    \end{cases}
\end{equation}
In this way, the discovered explicit symmetries serve as a powerful architectural prior, significantly reducing the search space for PDE discovery.

\subsubsection{LieSINDy} 
Once $d^\ast$ is determined, we simply configure the function library to include polynomials up to degree $d^\ast$ alongside the transcendental set $\mathcal{F}_{\mathrm{trans}}$. 
Furthermore, the discovered symmetries can provide explicit priors for selecting specific functional terms in the candidate library, as noted in \citet{yang2025discovering}.

In addition to the symmetry-informed architectural priors and library constraints, we incorporate a symmetry loss into the objective functions of LieEQL and LieSINDy. Let $Q$ denote the matrix whose columns are the normalized coefficient vectors of the discovered generators. We define
the empirical symmetry loss as
\begin{equation}
    \mathcal L_{\mathrm{sym}}
    =\operatorname{MSE}\left(J_{F}\Theta_n^*Q,0\right),
\end{equation}
where the mean is taken over all sampled points, discovered
generators, and equation components.
\section{Appendix E: Detailed Results}
\subsection{E.1 Grassmann distance}
Table~\ref{tab:grassmann_distance} compares the Grassmann distance of different methods on the KdV, wave, and reaction-diffusion equations. As shown in the table, LieDiscover achieves the lowest Grassmann distance on three tasks. In particular, compared with LieGAN, LieDiscover significantly reduces the subspace error by several orders of magnitude. Compared with LieNLSD, LieDiscover achieves smaller distances, showing that it recovers Lie algebra subspaces closer to the ground truth.
These results verify the effectiveness of LieDiscover in accurately identifying the symmetry subspace of different PDE systems.

\begin{table*}[!h]
\centering
\caption{Grassmann distance between the discovered Lie algebra subspace and the ground-truth subspace. The results are reported as mean $\pm$ standard deviation over three random seeds. Lower values are better.}
\label{tab:grassmann_distance}
\begin{tabular}{lcccc}
\toprule
Method & KdV & Wave & Reaction-diffusion & Schr\"odinger \\
\midrule
LieGAN
& $1.56 \pm 0.0003$
& $2.36 \pm 0.15$
& $1.11 \pm 0.11$ 
& $2.22 \pm 0.05$ \\
LieNLSD
& $(4.86 \pm 1.68)\times 10^{-3}$
& ${(1.43 \pm 0.03)\times 10^{-2}}$
& $(1.69 \pm 0.19)\times 10^{-1}$ 
& $(1.14 \pm 0.28)\times10^{-1}$ \\
LieDiscover
& {$\mathbf{(4.09 \pm 2.63)\times 10^{-3}}$}
& {$\mathbf{(1.43 \pm 0.02)\times 10^{-2}}$}
& {$\mathbf{(1.65 \pm 0.29)\times 10^{-1}}$}
& {$\mathbf{(9.76 \pm 1.20)\times10^{-2}}$}\\
\bottomrule
\end{tabular}%
\end{table*}

\subsection{E.2 Symmetry Discovery}
In this section, we present the implementation details and visualization results for all tasks.

\subsubsection{Top quark tagging}
\begin{figure*}[!h]
    \centering
    \includegraphics[width=0.8\linewidth]{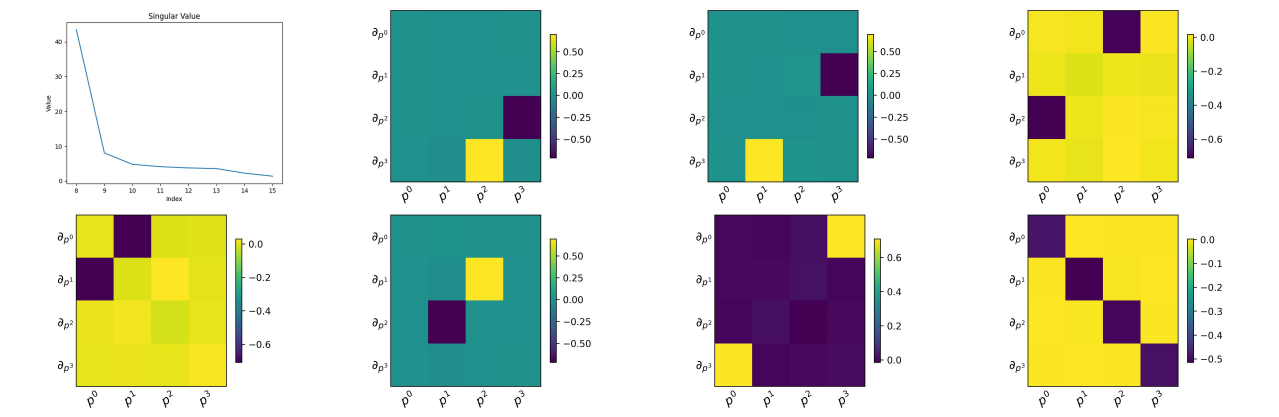}
    \caption{Visualization result of symmetry discovery on top quark tagging by LieDiscover. The first subplot shows the last 8 singular values. The other seven subplots display the infinitesimal generators corresponding to the nearly-zero singular values after sparsification.}
    \label{fig:top}
\end{figure*}
For the top quark tagging task, we observe the four-momenta of the 20 jet constituents with the highest transverse momentum. Each constituent is represented by \(p^\mu=(p^0,p^1,p^2,p^3)\). Therefore, the input dimension of the classifier is \(20\times 4=80\), and the output dimension is one for binary classification.

To estimate the Jacobian required for symmetry discovery, we train a neural classifier to approximate the top quark tagging function. The classifier is an MLP with three hidden layers, hidden dimension 200, ReLU activation, and a Sigmoid output. It is trained using the binary cross-entropy loss. After training, we compute the Jacobian of the classifier output with respect to the input four-momenta by automatic differentiation.

We then apply LieDiscover to search for infinitesimal generators. Since top quark tagging is a static task, the prolongation order is set to \(n=0\). We initialize the candidate library as \(\Theta_0=\{p^0,p^1,p^2,p^3\}\), set the maximum library size to \(N=8\), and use a maximum expansion horizon of \(T_{\max}=4\). During the search, no additional generated term leads to a positive increase in the null-space dimension. After pruning, the compact library remains \(\Theta_f=\{p^0,p^1,p^2,p^3\}\). This is consistent with the fact that the underlying symmetries in this task are linear transformations of the four-momentum.

The ground-truth Lie algebra of this task contains seven infinitesimal generators, as shown in Equation~\eqref{eq: top generator}. 
\begin{equation}
\label{eq: top generator}
    \begin{aligned}
        \mathbf v_1 &= -p^3\partial_{p^2}+p^2\partial_{p^3},\\
        \mathbf v_2 &= p^1\partial_{p^0}+p^0\partial_{p^1},\\
        \mathbf v_3 &= -p^3\partial_{p^1}+p^1\partial_{p^3},\\
        \mathbf v_4 &= p^3\partial_{p^0}+p^0\partial_{p^3},\\
        \mathbf v_5 &= -p^2\partial_{p^1}+p^1\partial_{p^2},\\
        \mathbf v_6 &= p^2\partial_{p^0}+p^0\partial_{p^2},\\
        \mathbf v_7 &= p^0\partial_{p^0}+p^1\partial_{p^1}
        +p^2\partial_{p^2}+p^3\partial_{p^3}.
    \end{aligned}
\end{equation}
Among them, \(\mathbf v_1\), \(\mathbf v_3\), and \(\mathbf v_5\) correspond to spatial rotations in the \((p^2,p^3)\), \((p^1,p^3)\), and \((p^1,p^2)\) planes, respectively. The generators \(\mathbf v_2\), \(\mathbf v_4\), and \(\mathbf v_6\) represent Lorentz boosts along the \(p^1\), \(p^3\), and \(p^2\) directions, respectively. Finally, \(\mathbf v_7\) corresponds to the global scaling of the four-momentum \(p^\mu=(p^0,p^1,p^2,p^3)\). Together, these generators characterize the Lorentz and scaling symmetries of the top quark tagging data.

Figure~\ref{fig:top} visualizes the singular values and the sparsified infinitesimal generators discovered by LieDiscover. The seven nearly-zero singular values indicate that LieDiscover correctly identifies a seven-dimensional Lie algebra subspace, and the recovered generators match the ground-truth rotation, boost, and scaling directions.

\subsubsection{Burgers' Equation}
\begin{figure*}[!h]
    \centering
    \includegraphics[width=0.8\linewidth]{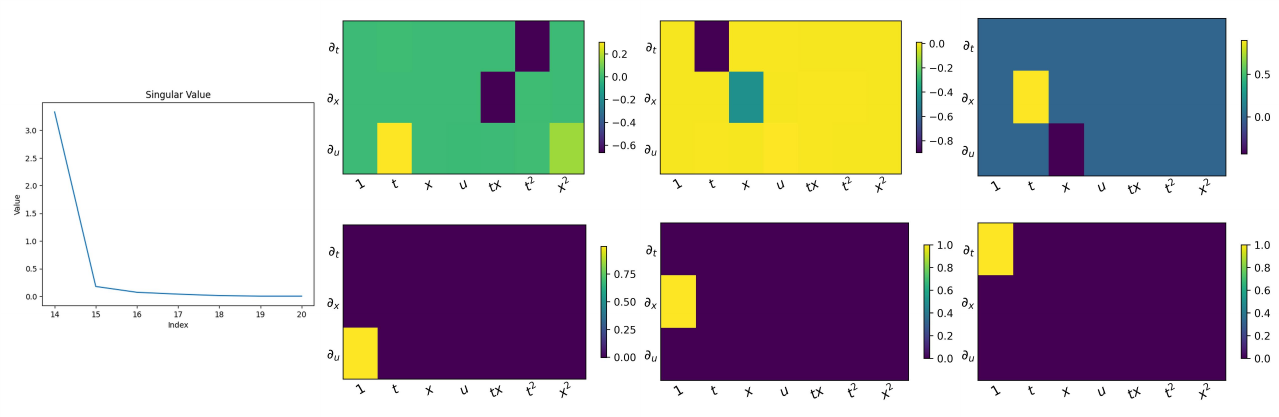}
    \caption{Visualization result of symmetry discovery on Burgers' Equation by LieDiscover. The first subplot shows the last 7 singular values. The other six subplots display the infinitesimal generators corresponding to the nearly-zero singular values after sparsification.}
    \label{fig:burgers'}
\end{figure*}

For Burgers' equation, we consider the evolution equation $u_t=u_{xx}+u_x^2$. We use exactly the same dataset as LieNLSD \cite{hu2025explicit}.

For estimating the Jacobian matrix, we train an MLP surrogate with three hidden layers, hidden dimension 200, and Sigmoid activation to approximate the evolution map $(u,u_x,u_{xx})\mapsto u_t$. The residual Jacobian is then constructed by automatic differentiation. In the search process, we initialize the candidate library as
\(\Theta_0=\{1,t,x,u\}\), set the maximum library size to \(N=10\), and use a maximum expansion horizon of \(T_{\max}=6\). Starting from \(d_0=5\), LieDiscover expands the library to $\Theta^\ast=\{1,t,x,u,tx,t^2,x^2,tx^2,u^2,t^2u\}$,
where the null-space dimension increases to \(d=6\). After pruning, the compact library becomes $\Theta_f=\{1,t,x,u,tx,t^2,x^2\}$.

The corresponding six infinitesimal generators are
\begin{equation}
    \begin{aligned}
        \mathbf v_1 &= 4t^2\partial_t+4tx\partial_x-(2t+x^2)\partial_u,\\
        \mathbf v_2 &= 2t\partial_t+x\partial_x,\\
        \mathbf v_3 &= 2t\partial_x-x\partial_u,\\
        \mathbf v_4 &= \partial_u,\quad
        \mathbf v_5 = \partial_t,\quad
        \mathbf v_6 = \partial_x .
    \end{aligned}
\end{equation}
\(\mathbf v_5\) and \(\mathbf v_6\) correspond to time and space translations, \(\mathbf v_4\) is a vertical translation in \(u\), \(\mathbf v_2\) is a scaling symmetry, while \(\mathbf v_1\) and \(\mathbf v_3\) correspond to projective and Galilean-type transformations. LieDiscover recovers these six generators from the pruned library, showing that the search successfully identifies the minimal polynomial terms required to represent the Burgers symmetry algebra.

For Burgers' equation, Figure~\ref{fig:burgers'} visualizes the singular values and the sparsified infinitesimal generators discovered by LieDiscover.

\subsubsection{Heat Equation}
\begin{figure*}[!h]
    \centering
    \includegraphics[width=\linewidth]{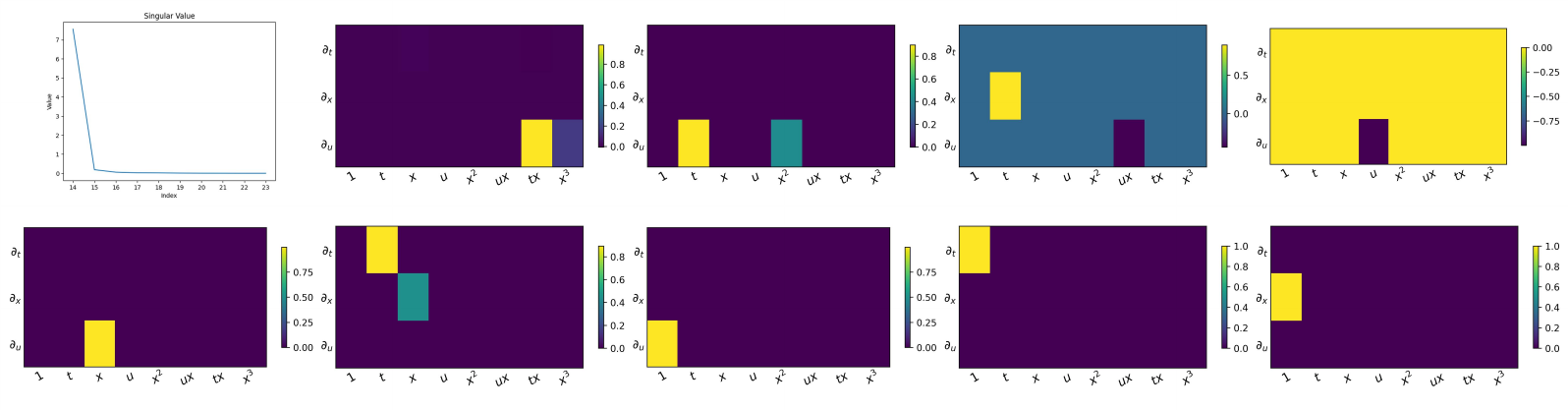}
    \caption{Visualization result of symmetry discovery on Heat Equation by LieDiscover. The first subplot shows the last 10 singular values. The other nine subplots display the infinitesimal generators corresponding to the nearly-zero singular values after sparsification.}
    \label{fig:heat}
\end{figure*}
For the heat equation, we consider $u_t=u_{xx}$. We use the same dataset as LieNLSD. The neural surrogate is the common MLP with three hidden layers, hidden dimension 200, Sigmoid activation, and one regression output. It is trained to approximate the evolution map $(u,u_x,u_{xx})\mapsto u_t$, and the corresponding residual Jacobian is computed by automatic differentiation. We initialize \(\Theta_0=\{1,t,x,u\}\), set \(N=10\), and use \(T_{\max}=6\). LieDiscover identifies the best searched library $\Theta^\ast=\{1,t,x,u,x^2,ux,tx,x^3,u^2\}.$
After pruning, the compact library is $\Theta_f=\{1,t,x,u,x^2,ux,tx,x^3\}.$

The discovered generators include the finite dimensional geometric symmetries,
\begin{equation}
    \begin{aligned}
        \mathbf v_2 &= 2t\partial_t+x\partial_x,\\
        \mathbf v_3 &= 2t\partial_x-xu\partial_u,\\
        \mathbf v_5 &= u\partial_u,\\
        \mathbf v_7&=\partial_x,\quad
        \mathbf v_8=\partial_t,
    \end{aligned}
\end{equation}
as well as polynomial representatives of the linear superposition symmetry,
\begin{equation}
    \begin{aligned}
        \mathbf v_1&=(2t+x^2)\partial_u,\\
        \mathbf v_4&=x\partial_u,\\
        \mathbf v_6&=\partial_u,\\
        \mathbf v_9&=(6tx+x^3)\partial_u.
    \end{aligned}
\end{equation}
More generally, this infinite-dimensional family has the form $\mathbf v_\psi=\psi(t,x)\partial_u$, where $\psi_t=\psi_{xx}$. The associated group action maps a solution $u=f(t,x)$ to $f(t,x)+\varepsilon_g\psi(t,x)$, where $\varepsilon_g$ denotes the group parameter. Thus, unlike nonlinear equations with a finite-dimensional symmetry algebra, the heat equation naturally yields additional solution generators. LieDiscover captures the geometric symmetries and finite polynomial representatives of this infinite-dimensional linear superposition symmetry.

For Heat equation, Figure~\ref{fig:heat} visualizes the singular values and the sparsified infinitesimal generators discovered by LieDiscover.

\subsubsection{KdV Equation}
\begin{figure*}[!t]
    \centering
    \includegraphics[width=\linewidth]{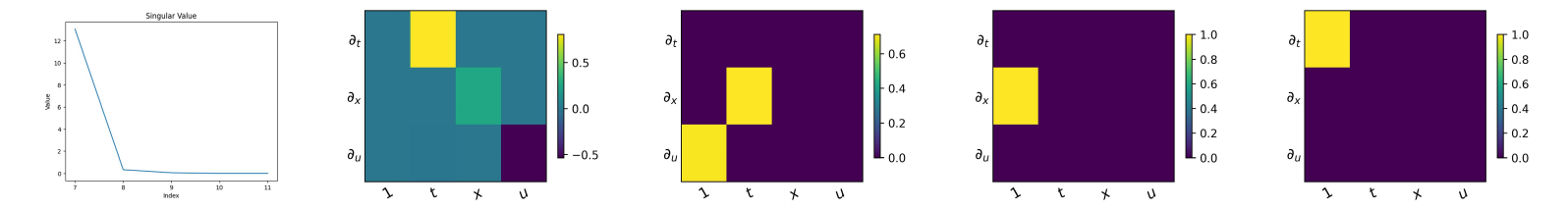}
    \caption{Visualization result of symmetry discovery on KdV Equation by LieDiscover. The first subplot shows the last 5 singular values. The other four subplots display the infinitesimal generators corresponding to the nearly-zero singular values after sparsification.}
    \label{fig:KdV}
\end{figure*}
The KdV equation is a classical nonlinear dispersive PDE that describes the propagation of shallow-water waves and other nonlinear wave phenomena. In our experiments, we consider the standard form $u_t+u_{xxx}+uu_x=0$. We use the same dataset as LieNLSD.

The neural surrogate is the common MLP with three hidden layers, hidden dimension 200, Sigmoid activation, and one regression output. It is trained to approximate the evolution map $(u,u_x,u_{xx},u_{xxx})\mapsto u_t$, and the corresponding residual Jacobian is computed by automatic differentiation.

We initialize the library as $\Theta_0=\{1,t,x,u\}$ and set the maximum expansion horizon to $T_{\max}=10$. LieDiscover explores an expanded candidate space and identifies the best library as $\Theta^\ast=\{1,t,x,u,u^2,tu,x^2,tx^2,t^6\}$. However, the null-space dimension remains $d=4$, indicating that the initial library already captures the discovered symmetry rank under the current configuration. After pruning, the final compact library returns to $\Theta_f=\{1,t,x,u\}$.

This library is sufficient to recover the four-dimensional symmetry algebra:
\begin{equation}
    \begin{aligned}
        \mathbf v_1 &= -3t\partial_t-x\partial_x+2u\partial_u,\\
        \mathbf v_2 &= t\partial_x+\partial_u,\\
        \mathbf v_3 &= \partial_t,\quad
        \mathbf v_4=\partial_x.
    \end{aligned}
\end{equation}
\(\mathbf v_3\) and \(\mathbf v_4\) are time and space translations, \(\mathbf v_1\) is a scaling symmetry, and \(\mathbf v_2\) is the Galilean-type symmetry of the KdV equation. The result shows that LieDiscover does not unnecessarily enlarge the library when the initial terms are already sufficient.

Figure~\ref{fig:KdV} visualizes the singular values and the sparsified infinitesimal generators discovered by LieDiscover. The four nearly-zero singular values indicate that LieDiscover correctly identifies a four-dimensional Lie algebra subspace.

\subsubsection{Damped Burgers' Equation}
\begin{figure*}[!t]
    \centering
    \includegraphics[width=\linewidth]{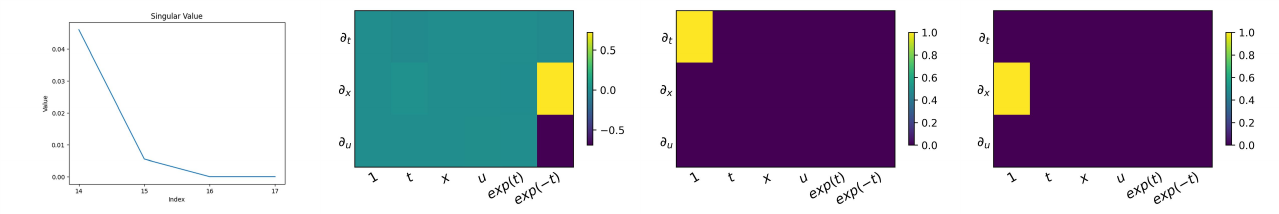}
    \caption{Visualization result of symmetry discovery on Damped Burgers' Equation by LieDiscover. The first subplot shows the last 4 singular values. The other three subplots display the infinitesimal generators corresponding to the nearly-zero singular values after sparsification.}
    \label{fig:damped_burger}
\end{figure*}

For the damped Burgers' equation, we consider
$u_t=u_{xx}-uu_x-\mu u$. The damping term changes the Galilean symmetry of the standard Burgers equation and introduces exponential factors in the infinitesimal generators. We generate a new dataset because this task is not included in the original benchmarks in LieNLSD. In our experiments, we set \(\mu=1\). The data are generated on a periodic one-dimensional spatial domain from random Fourier initial conditions. We evolve the PDE forward in time using a finite-difference spatial discretization with periodic boundary conditions and an explicit time integrator. The resulting trajectories are used to compute the required jet variables, including \(u,u_x,u_{xx}\), and \(u_t\).

The neural surrogate is again an MLP with three hidden layers, hidden dimension 200, Sigmoid activation, and one regression output. It is trained to approximate the evolution map $(u,u_x,u_{xx})\mapsto u_t$, and the corresponding residual Jacobian is computed by automatic differentiation. We initialize $\Theta_0=\{1,t,x,u\}$, set \(N=10\), and use \(T_{\max}=6\). The search yields $\Theta^\ast=\{1,t,x,u,e^t,e^{-t}\}$, and pruning leaves it unchanged, so $\Theta_f=\Theta^\ast$.

The recovered generators are
\begin{equation}
    \begin{aligned}
        \mathbf v_1&=\partial_t,\quad
        \mathbf v_2=\partial_x,\\
        \mathbf v_3&=e^{-t}\partial_x-e^{-t}\partial_u .
    \end{aligned}
\end{equation}
The first two generators correspond to time and space translations. The third generator is a damping-modified pseudo-Galilean symmetry: the spatial shift and the compensation in \(u\) decay exponentially in time. This example demonstrates that LieDiscover can extend the library beyond polynomials and recover non-polynomial symmetry structures.

Figure~\ref{fig:damped_burger} visualizes the singular values and the sparsified infinitesimal generators discovered by LieDiscover.

\subsubsection{Forced Transport Equation}
\begin{figure*}[!t]
    \centering
    \includegraphics[width=\linewidth]{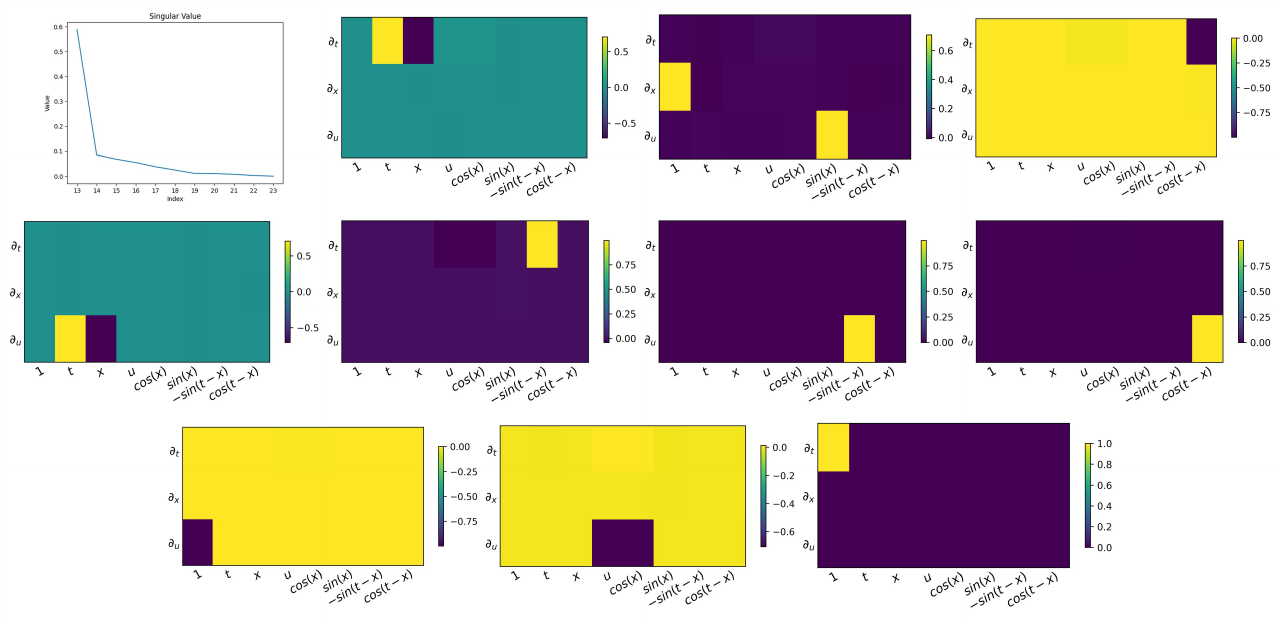}
    \caption{Visualization result of symmetry discovery on Forced Transport Equation by LieDiscover. The first subplot shows the last 11 singular values. The other ten subplots display the infinitesimal generators corresponding to the nearly-zero singular values after sparsification.}
    \label{fig:forced transport}
\end{figure*}
For the forced transport equation, we consider \(u_t+c u_x=\sin(x)\) with \(c=1\) on a periodic one-dimensional domain. We sample random Fourier initial conditions and use the exact characteristic solution to compute \(u(t,x)\), \(u_x(t,x)\), \(u_{xx}(t,x)\), and \(u_t(t,x)\), which form the training data.

The surrogate model is the common MLP with three hidden layers, hidden dimension 200, Sigmoid activation, and one regression output. It is trained to approximate the evolution map $(x,u,u_x,u_{xx})\mapsto u_t$, and the corresponding residual Jacobian is computed by automatic differentiation. In the search process, we initialize $\Theta_0=\{1,t,x,u\}$, set \(N=12\), and use \(T_{\max}=8\). The decoder primitive set is $\{\mathrm{mul},n2,\mathrm{sub},\sin,\cos\}$.
Additional prior constraints are used to restrict the arguments of trigonometric functions and avoid pathological nested expressions such as \(\sin(\cos(\cdot))\)\cite{petersen2019deep}. After pruning, the compact library is $\Theta_f=\{1,t,x,u,\cos x,\sin x,-\sin(t-x),\cos(t-x)\}$.

Since the equation is linear in \(u\), it admits solution-superposition symmetries. LieDiscover recovers generators including
\begin{equation}
    \begin{aligned}
        \mathbf v_1 &= \partial_t,\quad
        \mathbf v_2=\partial_u,\\
        \mathbf v_3 &= \partial_x+\sin(x)\partial_u,\\
        \mathbf v_4 &= u\partial_u+\cos(x)\partial_u,\\
        \mathbf v_5 &= (t-x)\partial_u,\\
        \mathbf v_6 &= \sin(t-x)\partial_u,\\
        \mathbf v_7&=\cos(t-x)\partial_u,\\
        \mathbf v_8 &= (t-x)\partial_t,\\
        \mathbf v_9&=\sin(t-x)\partial_t,\\
        \mathbf v_{10}&=\cos(t-x)\partial_t .
    \end{aligned}
\end{equation}
The solution-superposition family can be written as $\mathbf v_\psi=\psi(t-x)\partial_u$, where $\psi$ is arbitrary; equivalently, $\psi(t,x)$ satisfies the homogeneous transport equation $\psi_t+\psi_x=0$. These generators reflect the trigonometric forcing structure and the linear superposition property of the transport equation. The result shows that the decoder can adaptively introduce non-polynomial basis functions when required by the symmetry.

Figure~\ref{fig:forced transport} visualizes the singular values and the sparsified infinitesimal generators discovered by LieDiscover.

\subsubsection{Wave Equation}
\begin{figure*}[!t]
    \centering
    \includegraphics[width=\linewidth]{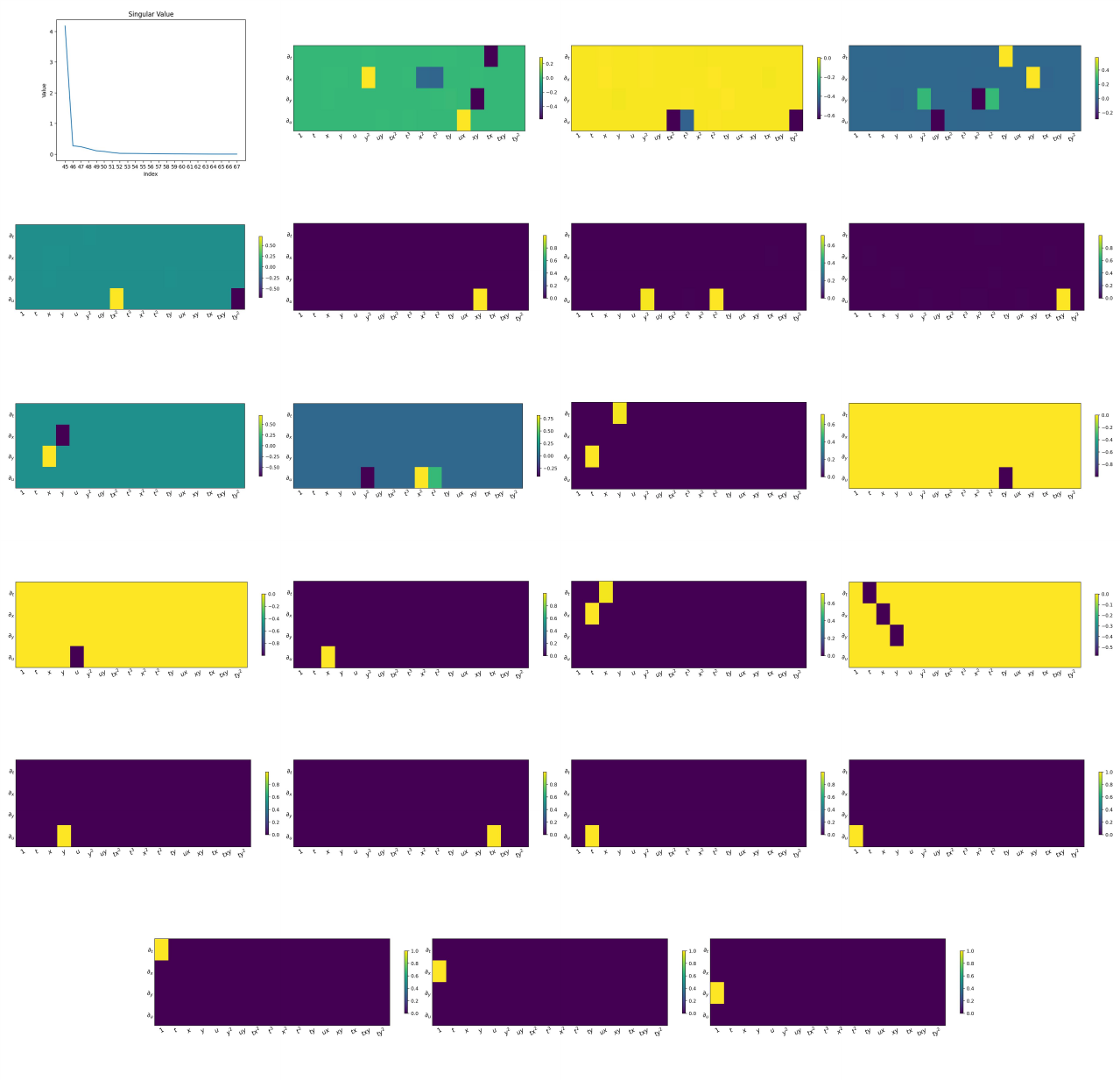}
    \caption{Visualization result of symmetry discovery on Wave Equation by LieDiscover. The first subplot shows the last 23 singular values. The other 22 subplots display the infinitesimal generators corresponding to the nearly-zero singular values after sparsification.}
    \label{fig:wave}
\end{figure*}

For the two-dimensional wave equation, we consider $ u_{tt}=u_{xx}+u_{yy}$. We use the same dataset as LieNLSD. The surrogate is the common MLP with three hidden layers, hidden dimension 200, Sigmoid activation, and one regression output. It is trained to approximate the evolution map $(u,u_x,u_y,u_{xx},u_{yy},u_{xy})\mapsto u_{tt}$, and the corresponding residual Jacobian is computed by automatic differentiation.

We initialize $\Theta_0=\{1,t,x,y,u\}$, set \(N=20\), and use \(T_{\max}=15\). After search and pruning, the compact library is
\[
\Theta_f=\{1,t,x,y,u,y^2,uy,tx^2,t^3,x^2,t^2,ty,ux,xy,tx,txy,ty^2\}.
\]

The wave equation contains translations, rotations, scalings, conformal-type transformations, and a large family of solution-superposition symmetries. LieDiscover recovers a 22-dimensional finite symmetry subspace in the searched polynomial library. Representative generators include
\begin{equation}
    \begin{aligned}
        \mathbf v_1 &= 2tx\partial_t+(t^2+x^2-y^2)\partial_x
        +2xy\partial_y-xu\partial_u,\\
        \mathbf v_2 &= 2ty\partial_t+2xy\partial_x
        +(t^2-x^2+y^2)\partial_y-yu\partial_u,\\
        \mathbf v_3 &= x\partial_t+t\partial_x,\quad
        \mathbf v_4=y\partial_t+t\partial_y,\\
        \mathbf v_5 &= t\partial_t+x\partial_x+y\partial_y,\quad
        \mathbf v_6=-y\partial_x+x\partial_y .
    \end{aligned}
\end{equation}
In addition, LieDiscover recovers translations \(\partial_t,\partial_x,\partial_y\), scaling \(u\partial_u\), and multiple polynomial solution modes such as
\(\partial_u,t\partial_u,x\partial_u,y\partial_u,tx\partial_u,ty\partial_u,xy\partial_u\), and higher-order polynomial modes. These modes are finite-dimensional representatives of the family $\mathbf v_\psi=\psi(t,x,y)\partial_u$, where $\psi_{tt}=\psi_{xx}+\psi_{yy}$. This confirms that the method can handle PDEs with substantially larger symmetry spaces.

Figure~\ref{fig:wave} visualizes the singular values and the sparsified infinitesimal generators discovered by LieDiscover. 

\subsubsection{Schr\"odinger Equation}
\begin{figure*}[!t]
    \centering
    \includegraphics[width=\linewidth]{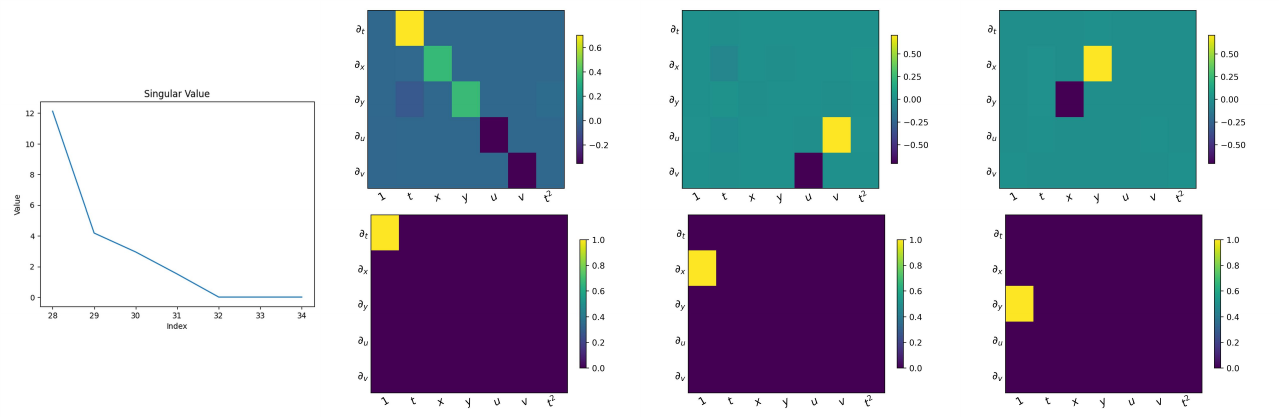}
    \caption{Visualization result of symmetry discovery on Schr\"odinger Equation by LieDiscover. The first subplot shows the last 7 singular values. The other six subplots display the infinitesimal generators corresponding to the nearly-zero singular values after sparsification.}
    \label{fig:Schrodinger}
\end{figure*}
The nonlinear Schr\"odinger equation is a fundamental dispersive PDE that models wave propagation in nonlinear media, including nonlinear optics and Bose--Einstein condensates. Here, the complex-valued wave field is written in terms of its real and imaginary components, \(u\) and \(v\), which couple through nonlinear self-interaction and spatial dispersion.
\[
\begin{aligned}
    u_t &= -\frac{1}{2}(v_{xx}+v_{yy}) + v(u^2+v^2),\\
    v_t &= \frac{1}{2}(u_{xx}+u_{yy}) - u(u^2+v^2),
\end{aligned}
\]
we use the same dataset as LieNLSD. 

The surrogate model is the common three-hidden-layer MLP with hidden dimension 200, Sigmoid activation, and two regression outputs. It is trained to approximate the evolution map $(u,u_x,u_y,u_{xx},u_{yy},u_{xy},v,v_x,v_y,v_{xx},v_{yy},v_{xy})\mapsto(u_t,v_t)$, and the corresponding residual Jacobian is computed by automatic differentiation. In the search process, we initialize the library as $\Theta_0=\{1,t,x,y,u,v\}$ and set the maximum expansion horizon to $T_{\max}=21$. LieDiscover identifies the best library as $\Theta^\ast=\{1,t,x,y,u,v,t^2,x^2,tv,ty,t^4y,xy,y^2\}$. Under the current configuration, the null-space dimension remains $d=6$, indicating that the initial library already captures the principal symmetry rank, while the additional terms enrich the candidate space for generator refinement. After pruning, the final compact library is $\Theta_f=\{1,t,x,y,u,v,t^2\}$.

The recovered generators are
\begin{equation}
    \begin{aligned}
    \mathbf v_1 &= -2t\partial_t-x\partial_x-y\partial_y
    +u\partial_u+v\partial_v,\\
    \mathbf v_2 &= -v\partial_u+u\partial_v,\\
    \mathbf v_3 &= -y\partial_x+x\partial_y,\\
    \mathbf v_4 &= \partial_t,\quad
    \mathbf v_5=\partial_x,\quad
    \mathbf v_6=\partial_y .
    \end{aligned}
\end{equation}
The first generator corresponds to scaling, the second to phase rotation, the third to spatial rotation, and the remaining three to translations in time and space. The discovered generators match the expected finite-dimensional symmetry structure of the Schr\"odinger system.

Figure~\ref{fig:Schrodinger} visualizes the singular values and the sparsified infinitesimal generators discovered by LieDiscover.

\subsubsection{Reaction-diffusion Equation}
\begin{figure*}[!t]
    \centering
    \includegraphics[width=0.8\linewidth]{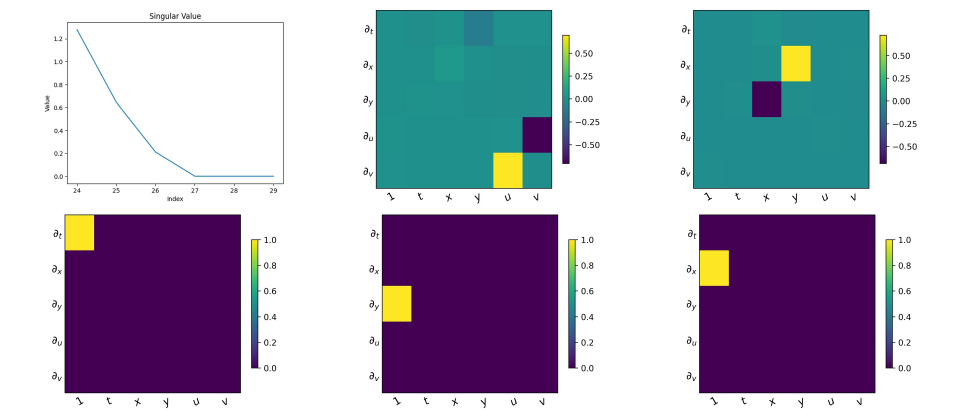}
    \caption{Visualization result of symmetry discovery on Reaction-diffusion Equation by LieDiscover. The first subplot shows the last 6 singular values. The other five subplots display the infinitesimal generators corresponding to the nearly-zero singular values after sparsification.}
    \label{fig:RD}
\end{figure*}
For the Reaction-diffusion system, we consider the PDE
\[
\begin{aligned}
    u_t &= (1-u^2-v^2)u+(u^2+v^2)v+0.1(u_{xx}+u_{yy}),\\
    v_t &= -(u^2+v^2)u+(1-u^2-v^2)v+0.1(v_{xx}+v_{yy}).
\end{aligned}
\]
The reaction--diffusion system models the spatiotemporal evolution of two interacting components under nonlinear local reactions and diffusion. In this system, the terms involving \(u^2+v^2\) describe nonlinear coupling between the two fields, while the Laplacian terms account for spatial diffusion.

We use the same dataset as LieNLSD. The neural surrogate is the common PDE MLP with three hidden layers, hidden dimension 200, Sigmoid activation, and two regression outputs. It is trained to approximate the evolution map $(u,u_x,u_y,u_{xx},u_{yy},u_{xy},v,v_x,v_y,v_{xx},v_{yy},v_{xy})\mapsto(u_t,v_t)$, and the corresponding residual Jacobian is computed by automatic differentiation. For the reaction-diffusion equation, we initialize the library as $\Theta_0=\{1,t,x,y,u,v\}$ and set the maximum expansion horizon to $T_{\max}=21$. LieDiscover identifies the best library as $\Theta^\ast=\{1,t,x,y,u,v,t^4,t^2,t^3,vy,tx,t^2y,tv^2,tux\}$, where the null-space dimension is 5. After pruning, the final compact library is $\Theta_f=\{1,t,x,y,u,v,t^2,t^3\}$, which removes redundant high-order terms and retains the core structure used in the final visualization and downstream equation discovery.

The recovered generators are
\begin{equation}
    \begin{aligned}
        \mathbf v_1 &= -v\partial_u+u\partial_v,\\
        \mathbf v_2 &= -y\partial_x+x\partial_y,\\
        \mathbf v_3 &= \partial_t,\quad
        \mathbf v_4=\partial_x,\quad
        \mathbf v_5=\partial_y .
    \end{aligned}
\end{equation}
\(\mathbf v_1\) is the internal phase rotation in the \((u,v)\) plane, \(\mathbf v_2\) is spatial rotation in the \((x,y)\) plane, and \(\mathbf v_3,\mathbf v_4,\mathbf v_5\) are translations. LieDiscover correctly identifies this five-dimensional symmetry algebra from the learned surrogate.

Figure~\ref{fig:RD} visualizes the singular values and the sparsified infinitesimal generators discovered by LieDiscover.

\subsection{E.3 PDE Discovery}
Table~\ref{tab:pde_metrics_summary} reports detailed PDE discovery results for different methods. LieEQL and LieSINDy denote EQL and SINDy augmented with symmetry guidance, respectively. Overall, symmetry guidance improves PDE discovery, especially structural recovery. Compared with EQL, LieEQL achieves a TPR of \(1.00\) on Burgers', KdV, forced transport, wave, Schr\"odinger, and reaction-diffusion equations. The gains are most evident on forced transport and Schr\"odinger equations, where the TPR increases from \(1.16\times10^{-1}\) and \(5.19\times10^{-2}\) to \(1.00\), respectively. LieSINDy shows similar improvements, it recovers the full equation structure on these two systems, while SINDy achieves TPRs of only \(2.50\times10^{-1}\) and \(3.99\times10^{-3}\).

Symmetry guidance also improves coefficient estimation on challenging systems. On the forced transport equation, LieEQL reduces \(E_2\) from \(2.01\) to \(1.26\times10^{-3}\) compared with EQL. LieSINDy reduces \(E_2\) from \(9.33\times10^{-1}\) to \(1.12\times10^{-3}\) compared with SINDy. These results show that the discovered symmetries provide useful priors for PDE discovery.

\begin{table*}[!h]
\centering
\caption{Quantitative comparison of PDE discovery methods on all datasets.}
\label{tab:pde_metrics_summary}
\begin{tabular}{@{}llccc@{}}
\toprule
Data & Method & TPR $\uparrow$ & $E_{\infty}\downarrow$ & $E_2\downarrow$ \\
\midrule
Burgers' 
& EQL & $(3.17 \pm 0.22)\times 10^{-1}$ &{$\mathbf{(1.64 \pm 0.05)\times 10^{-2}}$} & $(6.43 \pm 1.35)\times 10^{-1}$ \\
& LieEQL & {$\mathbf{1.00 \pm 0.00}$} & $(1.97 \pm 0.00)\times 10^{-2}$ & $(1.91 \pm 0.00)\times 10^{-2}$ \\
& SINDy & $1.00 \pm 0.00$ & $(1.99 \pm 0.12)\times 10^{-2}$ & $(1.94 \pm 0.11)\times 10^{-2}$ \\
& LieSINDy & $1.00 \pm 0.00$ & $(1.90 \pm 0.11)\times 10^{-2}$ & {$\mathbf{(1.80 \pm 0.11)\times 10^{-2}}$} \\
\midrule
Heat 
& EQL & {$\mathbf{1.00 \pm 0.00}$} & $(2.01 \pm 0.00)\times 10^{-2}$ & $(2.01 \pm 0.00)\times 10^{-2}$ \\
& LieEQL & $1.00 \pm 0.00$ & $(2.02 \pm 0.02)\times 10^{-2}$ & $(2.02 \pm 0.02)\times 10^{-2}$ \\
& SINDy & $1.00 \pm 0.00$ & $(2.01 \pm 0.00)\times 10^{-2}$ & $(2.01 \pm 0.00)\times 10^{-2}$ \\
& LieSINDy & $1.00 \pm 0.00$ & {$\mathbf{(2.01 \pm 0.00)\times 10^{-2}}$} & {$\mathbf{(2.01 \pm 0.00)\times 10^{-2}}$} \\
\midrule
KdV 
& EQL & $(6.39 \pm 3.07)\times 10^{-1}$ & $(3.63 \pm 1.16)\times 10^{-2}$ & $1.16 \pm 1.45$ \\
& LieEQL & {$\mathbf{1.00 \pm 0.00}$} & {$\mathbf{(2.98 \pm 0.00)\times 10^{-2}}$} & {$\mathbf{(2.11 \pm 0.00)\times 10^{-2}}$} \\
& SINDy & $1.00 \pm 0.00$ & $(2.99 \pm 0.00)\times 10^{-2}$ & $(2.22 \pm 0.00)\times 10^{-2}$ \\
& LieSINDy & $1.00 \pm 0.00$ & $(2.99 \pm 0.00)\times 10^{-2}$ & $(2.22 \pm 0.00)\times 10^{-2}$ \\
\midrule
Damped Burgers' 
& EQL & {$\mathbf{1.00 \pm 0.00}$} & $(2.58 \pm 1.79)\times 10^{-4}$ & $(1.73 \pm 1.01)\times 10^{-4}$ \\
& LieEQL & $1.00 \pm 0.00$ & $(4.11 \pm 0.07)\times 10^{-4}$ & $(2.51 \pm 0.07)\times 10^{-4}$ \\
& SINDy & $1.00 \pm 0.00$ & {$\mathbf{(1.58 \pm 0.68)\times 10^{-4}}$} & {$\mathbf{(1.01 \pm 0.33)\times 10^{-4}}$} \\
& LieSINDy & $1.00 \pm 0.00$ & $(2.75 \pm 0.55)\times 10^{-4}$ & $(1.64 \pm 0.33)\times 10^{-4}$ \\
\midrule
Forced transport
& EQL & $(1.16 \pm 0.07)\times 10^{-1}$ & $1.00 \pm 0.00$ & $2.01 \pm 0.77$ \\
& LieEQL & $\mathbf{1.00 \pm 0.00}$ & $(1.78 \pm 0.97)\times 10^{-3}$ & $(1.26 \pm 0.69)\times 10^{-3}$ \\
& SINDy & $(2.50 \pm 0.00)\times 10^{-1}$ & $1.00 \pm 0.00$ & $(9.33 \pm 0.01)\times 10^{-1}$ \\
& LieSINDy & $1.00 \pm 0.00$ & $\mathbf{(1.47 \pm 1.04)\times 10^{-3}}$ & $\mathbf{(1.12 \pm 0.75)\times 10^{-3}}$ \\
\midrule
Wave 
& EQL & $(2.30 \pm 0.56)\times 10^{-1}$ & $(5.22 \pm 2.61)\times 10^{-1}$ & $(8.71 \pm 2.23)\times 10^{-1}$ \\
& LieEQL & $\mathbf{1.00 \pm 0.00}$ & $(2.05 \pm 0.02)\times 10^{-2}$ & $(2.02 \pm 0.01)\times 10^{-2}$ \\
& SINDy & $1.00 \pm 0.00$ & $(2.01 \pm 0.00)\times 10^{-2}$ & $(2.01 \pm 0.00)\times 10^{-2}$ \\
& LieSINDy & $1.00 \pm 0.00$ & $\mathbf{(2.01 \pm 0.00)\times 10^{-2}}$ & $\mathbf{(2.01 \pm 0.00)\times 10^{-2}}$ \\
\midrule
Schr\"odinger 
& EQL & $(5.19 \pm 1.18)\times 10^{-2}$ & $2.54 \pm 0.93$ & $1.47 \pm 0.12$ \\
& LieEQL & $\mathbf{1.00 \pm 0.00}$ & $\mathbf{(2.10 \pm 0.08)\times 10^{-2}}$ & $\mathbf{(8.57 \pm 0.04)\times 10^{-3}}$ \\
& SINDy & $(3.99 \pm 1.14)\times 10^{-3}$ & $2.24 \pm 1.66$ & $(2.36 \pm 2.21)\times 10^{1}$ \\
& LieSINDy & $1.00 \pm 0.00$ & $(2.21 \pm 0.10)\times 10^{-2}$ & $(8.88 \pm 0.06)\times 10^{-3}$ \\
\midrule
Reaction-diffusion 
& EQL & $(1.44 \pm 0.08)\times 10^{-1}$ & $1.00 \pm 0.00$ & $(7.88 \pm 0.31)\times 10^{-1}$ \\
& LieEQL & $\mathbf{1.00 \pm 0.00}$ & $(7.69 \pm 1.42)\times 10^{-2}$ & $(1.08 \pm 0.06)\times 10^{-2}$ \\
& SINDy & $1.00 \pm 0.00$ & $(7.02 \pm 0.55)\times 10^{-2}$ & $(5.95 \pm 1.21)\times 10^{-3}$ \\
& LieSINDy & $1.00 \pm 0.00$ & $\mathbf{(2.83 \pm 1.08)\times 10^{-2}}$ & $\mathbf{(1.86 \pm 0.68)\times 10^{-3}}$ \\
\bottomrule
\end{tabular}
\end{table*}

\section{Appendix F: Experimental Settings}

\subsection{F.1 Computing Infrastructure}

All experiments were conducted on a server equipped with two Intel(R) Xeon(R) Gold 6330 CPUs @ 2.00GHz, 1.5~TB RAM, and two NVIDIA A100 GPUs with 80~GB memory each.

Regarding computational cost, the generator discovery stage of {LieDiscover} is lightweight, requiring only approximately 600~MiB of GPU memory during training.

\subsection{F.2 Hyper-parameters}
All hyper-parameters used in our experiments are documented in the released source code. The implementation includes the corresponding configuration files and parameter settings for each experiment, facilitating reproducibility.

\subsection{F.3 Metrics in PDE Discovery}
To quantitatively evaluate the discovered equations, we report one structural accuracy metric and two coefficient error metrics. 

Following WSINDy \cite{messenger2021weak}, we evaluate
structural recovery using the true positivity ratio (TPR),
defined as
\begin{equation}
\operatorname{TPR}(\hat{\boldsymbol w})
=
\frac{\mathrm{TP}}
{\mathrm{TP}+\mathrm{FN}+\mathrm{FP}},
\end{equation}
where TP, FN, and FP denote the numbers of correctly
identified active terms, missing true terms, and falsely
selected terms, respectively.

For coefficient accuracy, we use the maximum relative error and the normalized root-mean-square error:
\begin{equation}
    E_{\infty}(\hat{\mathbf w})
    =
    \max_{j:\mathbf w_j^\ast\neq 0}
    \frac{|\hat{\mathbf w}_j-\mathbf w_j^\ast|}
    {|\mathbf w_j^\ast|},
\end{equation}
\begin{equation}
    E_2(\hat{\mathbf w})
    =
    \frac{\|\hat{\mathbf w}-\mathbf w^\ast\|_{\mathrm{RMS}}}
    {\|\mathbf w^\ast\|_{\mathrm{RMS}}}.
\end{equation}
\(\hat{\mathbf w}\) denotes the coefficient vector of the discovered equation, and \(\mathbf w^\ast\) denotes the ground-truth coefficient vector. \(E_{\infty}\) measures the worst-case relative error on the true nonzero coefficients, while \(E_2\) measures the overall coefficient error. A higher TPR and lower \(E_{\infty}\) and \(E_2\) indicate better equation discovery performance.

\subsection{F.4 Baselines Settings}
\subsubsection{Symmetry Discovery}
For the symmetry discovery task, the primary baselines are LieGAN and LieNLSD. Their experimental settings are as follows.

\noindent\textbf{LieGAN.}
LieGAN~\cite{yang2023generative} learns infinitesimal generators through adversarial training. It uses Lie algebra generators to transform data and trains a discriminator to distinguish the transformed distribution from the original one. For Top task, Burgers' equation, Heat equation, KdV equation, Wave equation, Reaction-Diffusion equation, and Schr\"odinger equation, we follow the same data splits and hyperparameter settings used in the LieNLSD \cite{hu2025explicit} comparison. For the two newly introduced datasets, Damped Burgers' equation and Forced transport equation, we train LieGAN on exactly the same generated training data as our method. We use the original LieGAN implementation with batch size \(64\), \(100\) training epochs, discriminator learning rate \(2\times 10^{-4}\), generator learning rate \(10^{-3}\), regularization weight \(10^{-2}\), and the same random seeds \(\{0,1,2\}\) used in our experiments.

\noindent\textbf{LieNLSD.}
LieNLSD~\cite{hu2025explicit} represents candidate symmetries using a predefined function library and uses SVD to identify the number and explicit forms of infinitesimal generators. For all datasets already included in LieNLSD, namely Top task, Burgers' equation, Heat equation, KdV equation, Wave equation, Reaction-Diffusion equation, and Schr\"odinger equation, we keep the dataset, surrogate training, sampling strategy, and candidate library setting consistent with the original LieNLSD experiments. For the two newly introduced PDEs, we use the surrogate architecture, an MLP with three hidden layers, hidden dimension \(200\), Sigmoid activation, and \(100\) training epochs. The sample window used for symmetry discovery is fixed across methods by seed. Since LieNLSD requires a manually specified library, we use the Burgers-style polynomial library $\Theta =\{1,t,x,u,t^2,x^2,u^2,tx,tu,xu\}$ for Damped Burgers' equation and Forced transport equation.

\subsubsection{PDE Solving.}
In this section, we mainly introduce the experimental settings of the baselines used in the PDE solving task. We evaluate the effect of the discovered symmetries for downstream PDE solving using the LPSDA \cite{brandstetter2022lie} pipeline. The compared solver variants are FNO without augmentation, FNO with LPSDA based on \citet{ko2024learning}, FNO with LPSDA based on LieNLSD\cite{hu2025explicit}, FNO with LPSDA based on LieDiscover, and FNO with ground-truth symmetries. For Burgers' equation, Heat equation, and KdV equation, we follow the LieNLSD setting.

For Damped Burgers' equation and Forced transport equation, we use the newly generated datasets described above. The FNO is trained with \(10\) training trajectories, batch size \(16\), learning rate \(10^{-4}\), \(40\) epochs, and step learning-rate decay factor \(0.4\). The input history and prediction horizon are set to \(20\) time steps. For data augmentation, we use \(16\) transformed samples per trajectory and keep the probability of using the original untransformed data as \(p_{\mathrm{orig}}=0.5\). The transform batch size is \(32\), and the test batch size is \(1024\).

\subsubsection{PDE Discovery}
In this section, we mainly introduce the experimental settings of the baselines used in the PDE Discovery task.

\noindent\textbf{EQL.}
We use Equation Learner (EQL)~\cite{martius2016extrapolation} as a purely data-driven symbolic regression baseline. To make the comparison fair, EQL is trained on the same input-output data, with the same optimizer, learning rate schedule as LieEQL. Unlike LieEQL, EQL does not use any discovered infinitesimal generators, symmetry loss, or architecture hints.

For 1d PDEs, we use a single-layer EQL network with two identity, two $\sin$, two $\cos$, and two multiplication operators. For 2d PDE systems, we use a two-layer EQL network with six units for each operator type. This operator set enables EQL to represent polynomial and trigonometric expressions.

\noindent\textbf{SINDy.}
We use SINDy~\cite{rudy2017data} as a sparse regression baseline for PDE discovery. To make the comparison fair, SINDy is trained on the same input-output data as LieSINDy, and both methods use the same random subset of \(1000\) sampled points for each run. Unlike LieSINDy, SINDy does not use discovered infinitesimal generators, symmetry loss, or symmetry-guided library.

For 1d PDEs, we construct a polynomial candidate library up to degree \(3\). For 2d PDE systems, we use a polynomial candidate library up to degree \(4\).

\bibliography{liediscover}

\end{document}